\newtheorem{theorem}{Theorem}
\newtheorem{lemma}{Lemma}
\newtheorem{corollary}{Corollary}
\newtheorem{remark}{Remark}
\providecommand{\norm}[1]{|#1|}
\renewcommand\qed{$\blacksquare$}
\begin{document}

\title{Decentralized Event-Triggering for Control of Nonlinear Systems}

\author{Pavankumar Tallapragada and Nikhil Chopra
\thanks{This work was partially supported by the by the Office of Naval Research under grant number  N000141310160 and by National Science Foundation through grant number 1232127.}
\thanks{Pavankumar Tallapragada is with the Department of Mechanical Engineering,
        University of Maryland, College Park, 20742 MD, USA
        {\tt\small pavant@umd.edu}}%
\thanks{N. Chopra is with the Department of Mechanical Engineering and The Institute for Systems Research,
        University of Maryland, College Park, 20742 MD, USA
        {\tt\small nchopra@umd.edu}}%
}

\maketitle
\begin{abstract}
This paper considers nonlinear systems with full state feedback, a central controller and distributed sensors not co-located with the central controller. We present a methodology for designing decentralized asynchronous event-triggers, which utilize only locally available information, for determining the time instants of transmission from the sensors to the central controller. The proposed design guarantees a positive lower bound for the inter-transmission times of each sensor, while ensuring asymptotic stability of the origin of the system with an arbitrary, but priorly fixed, compact region of attraction. In the special case of Linear Time Invariant (LTI) systems, global asymptotic stability is guaranteed and scale invariance of inter-transmission times is preserved. A modified design method is also proposed for nonlinear systems, with the addition of event-triggered communication from the controller to the sensors, that promises to significantly increase the average sensor inter-transmission times compared to the case where the controller does not transmit data to the sensors. The proposed designs are illustrated through simulations of a linear and a nonlinear example.
\end{abstract}

\section{Introduction}

State based aperiodic event-triggering is receiving increased attention (a representative list of the recent literature includes \cite{tabuada2007, heemels2008, astrom2008, velasco2008, wang2010, lunze2010, lemmon2011survey, heemels2012intro}) as an alternative to the traditional time-triggering (example: periodic triggering) in sampled data control systems. In event based control systems, a state or data dependent event-triggering condition implicitly determines time instances at which control is updated or when a sensor transmits data to a controller. Such updates or transmissions are in general aperiodic and depend on the system state. Such a paradigm is particularly appealing in control systems with limited computational and/or communication resources.

Much of the literature on event-triggered control utilizes the full state information in the triggering conditions. However, in two very important classes of problems full state information is not available to the event-triggers. These are systems with decentralized sensing and/or dynamic output feedback control. In the latter case, full state information is not available even when the sensors and the controller are centralized (\textit{co-located}). In systems with decentralized sensing, each individual sensor has to base its decision to transmit data to a central controller only on locally available information. These two classes of problems are receiving attention in the community only recently - \cite{mazo2011, mazo2011async, mazo2012, mazo2012arxiv, wang2009dist, wang2011, depersis2013, pavan2013MSC} (decentralized sensing) and \cite{donkers2010, pavan2012necsys, pavan2012cdc, pavan2013cdc, lehmann2011, li2011, almeida2012} (output feedback control). This paper is an important addition to the limited literature on decentralized event-triggering in control systems with distributed sensors.

The basic contribution of this paper is a methodology for designing implicitly verified decentralized event-triggers for control of nonlinear systems. The system architecture we consider is one with full state feedback but with the sensors distributed and not co-located with a central controller. The proposed design methodology provides event-triggers that determine when each sensor transmits data to a central controller. The event-triggers are designed to utilize only locally available information, making the transmissions from the sensors asynchronous. The proposed design guarantees asymptotic stability of the origin of the system with an arbitrary, but fixed \textit{a priori}, compact region of attraction. It also guarantees a positive lower bound for the inter-transmission times of each sensor individually. In the special case of Linear Time Invariant (LTI) systems, global asymptotic stability is guaranteed and scale invariance of inter-transmission times is preserved. For nonlinear systems, we also propose a variant with the addition of event-triggered communication from the central controller to the sensors. This variant shows promise of significantly increasing the ``average" sensor inter-transmission times compared to the case when the controller does not transmit data to the sensors. Although our results fall short of mathematically proving an increase in the ``average" sensor inter-transmission times, some suggestive arguments are provided.  Nevertheless, results are provided that guarantee or suggest design choices that would increase lower bounds on inter-transmission times.

In the literature, distributed event-triggered control was studied with the assumption of weakly coupled subsystems in \cite{wang2009dist, wang2011} (finite $\mathcal{L}_p$ stable systems) and in \cite{depersis2013} (ISS and iISS systems), which allowed the design of event-triggers depending on only local information. While we do not make an assumption on the coupling strength, we do assume (in the nonlinear case) the knowledge of a global ISS Lyapunov function - an assumption that is sometimes seen as restrictive. However, note that we make the assumption that the ISS Lyapunov function is defined globally purely for ease of exposition. Since our results for the nonlinear systems guarantee only semi-global asymptotic stability, it is sufficient to have a semi-global ISS Lyapunov function, a much less severe requirement and one which may possibly be derived from Lyapunov functions that guarantee asymptotic stability (see \cite{pavan2013tac} for example). Further, with the event-triggers of \cite{wang2009dist, wang2011}, no positive lower bound for inter-transmission times exists in the global or even semi-global sense (see Remark \ref{rem:wang_compare} in the sequel).

Our proposed scheme is more closely related, at least in its aims and ideas, with \cite{mazo2011async, mazo2012, mazo2012arxiv} and with \cite{mazo2011} to a lesser extent. Our proposed scheme shares a similarity with \cite{mazo2011} in the use of timers (or a dwell time) in the event-triggers. One may view such inclusion of an implicitly verified dwell time as a combination of time-triggering and event-triggering. In the scheme proposed in \cite{mazo2011}, decentralized asynchronous event-triggering at the sensors is only utilized to request the central controller to seek a \textit{synchronous} snapshot of all the sensors' data. In order to reject closely timed requests by the sensors, the controller utilizes a timer whose expiration time threshold is determined from the properties of the corresponding centralized event-triggered control system. Our proposed event-triggers also include similar timers - however, their use, location and the computation of the expiration time thresholds are qualitatively different. One could say that the proposed scheme complements \cite{mazo2011} with the use of timers in each of the event-triggers at the sensors and as a result allows the controller to depend only on the asynchronously transmitted data. These differences would be apparent in the course of the paper and in particular in Remark \ref{rem:compare_mazo2011}. \cite{mazo2011async, mazo2012, mazo2012arxiv} proposed an interesting scheme for decentralized and asynchronous event-triggering with fixed and time-varying thresholds. However, the scheme guarantees only semi-global practical stability even for linear systems if the sensors do not to listen to the central controller. To the best of our knowledge, our work is unique in that our decentralized event-triggering scheme requires only unidirectional communication from the sensors to the controller and still guarantees semi-global asymptotic stability (global for LTI systems). As a result, no receivers are required at the sensor nodes - a feature that is very useful in low energy applications.

In the dynamic output feedback control literature, \cite{donkers2010, pavan2012necsys, pavan2012cdc, pavan2013cdc} consider asynchronous and decentralized event-triggering for Linear Time Invariant (LTI) systems. Again, the method in \cite{donkers2010} can guarantee only semi-global practical stability. In \cite{pavan2012necsys, pavan2012cdc, pavan2013cdc}, we have proposed a method that guarantees global asymptotic stability and positive minimum inter-transmission times. In fact, the basic principle in \cite{pavan2012necsys, pavan2012cdc, pavan2013cdc} is also used in this paper, with additional considerations for the nonlinear systems. The portion on the linear systems in this paper was presented in \cite{pavan2013MSC}.

The rest of the paper is organized as follows. Section \ref{sec:prob_setup} describes and formally sets up the problem under consideration. In Section \ref{sec:dec_async_event_trig}, the design of asynchronous decentralized event-triggers for nonlinear systems is presented - without, and then with, feedback from the central controller. Section \ref{sec:lin_sys} presents the special case of Linear Time Invariant (LTI) systems. The proposed design methodology is illustrated through simulations in Section \ref{sec:sim_results} and finally Section \ref{sec:conc} provides some concluding remarks.

\section{Problem Setup}
\label{sec:prob_setup}

Consider a nonlinear control system
\begin{align}
\dot{x} &= f(x,u), \quad x \in \mathbb{R}^n, \,\, u \in \mathbb{R}^m \label{eqn:nonlin_sys}
\end{align}
with the feedback control law
\begin{equation}
u = k(x+x_e) \label{eqn:control}
\end{equation}
where $x_e$ is the error in the measurement of $x$. In general, the measurement error can be due to many factors such as sensor noise and quantization. However, we consider measurement error that is purely a result of ``sampling" of the sensor data $x$. Before going into the precise definition of this measurement error, we first describe the broader problem. First, let us express \eqref{eqn:nonlin_sys} as a collection of $n$ scalar differential equations
\begin{equation}
\dot{x}_i = f_i(x,u), \quad x_i \in \mathbb{R}, \quad i \in \{1, 2, \ldots, n\}
\end{equation}
where $x = [x_1, x_2, \ldots, x_n]^T$. In this paper we are concerned with a distributed sensing scenario where each component, $x_i$, of the state vector $x$ is sensed at a different location. Although the $i^{\text{th}}$ sensor senses $x_i$ continuously in time, it transmits this data to a central controller only intermittently. In other words, the controller is a sampled-data controller that uses intermittently transmitted/sampled sensor data. In particular, we are interested in designing an asynchronous decentralized sensing mechanism based on local event-triggering that renders the origin of the closed loop system asymptotically stable.

To precisely describe the sampled-data nature of the problem, we now introduce the following notation. Let $\{t_j^{x_i}\}$ be the increasing sequence of time instants at which $x_i$ is sampled and transmitted to the controller. The resulting piecewise constant sampled signal is denoted by $x_{i,s}$, that is,
\begin{equation}
x_{i,s}(t) \triangleq x_i(t_j^{x_i}), \quad \forall t \in [t_j^{x_i}, t_{j+1}^{x_i}), \quad \forall j \in \{0, 1, 2, \ldots \} \label{eqn:samp}
\end{equation}
The sampled data, $x_{i,s}$, may be viewed as inducing an error in the the ``measurement" of the continuous-time signal, $x_i$. This measurement error is denoted by
\begin{equation*}
x_{i,e} \triangleq x_{i,s} - x_i = x_i(t_j^{x_i}) - x_i(t), \quad \forall t \in [t_j^{x_i}, t_{j+1}^{x_i})
\end{equation*}
Finally, we define the sampled-data vector and the measurement error vector as
\begin{equation*}
x_s \triangleq [x_{1,s}, x_{2,s}, \ldots, x_{n,s} ]^T, \quad \quad x_e \triangleq [x_{1,e}, x_{2,e}, \ldots, x_{n,e} ]^T
\end{equation*}
Note that, in general, the components of the vector $x_s$ are asynchronously sampled components of the plant state $x$. The components of $x_e$ are also defined accordingly.

In time-triggered implementations, the time instants $t_j^{x_i}$ are pre-determined and are commonly a multiple of a fixed sampling period. However, in event-triggered implementations the time instants $t_j^{x_i}$ are determined implicitly by a state/data based triggering condition at run-time. Consequently, an event-triggering condition may result in the inter-sample times $t_{j+1}^{x_i} - t_j^{x_i}$ to be arbitrarily close to zero or it may even result in the limit of the sequence $\{t_j^{x_i}\}$ to be a finite number (\textit{Zeno} behavior). Thus for practical utility, an event-trigger has to ensure that these scenarios do not occur.

Thus, the problem under consideration may be stated more precisely as follows. For the $n$ sensors, design event-triggers that depend only on local information and implicitly define the non-identical sequences $\{t_j^{x_i}\}$ such that (i) the origin of the closed loop system is rendered asymptotically stable and (ii) inter-sample (inter-transmission) times $t_{j+1}^{x_i} - t_j^{x_i}$ are lower bounded by a positive constant.

Finally, regarding the notation, in this paper $\norm{.}$ denotes the Euclidean norm of a vector. The state variables and other functions of time are frequently referred to without the time argument when there is no ambiguity. Similarly, some functions of variables other than time are also sometimes referred to without their arguments when there is no ambiguity.

In the next section, the main assumptions are introduced and the event-triggering for the decentralized sensing architecture is developed.

\section{Decentralized Asynchronous\\ Event-Triggering}
\label{sec:dec_async_event_trig}

In this section, the main assumptions are introduced and the event-triggers for the decentralized asynchronous sensing problem are developed.

\newcounter{saveenum}
\begin{enumerate}[label={\textbf{(A\arabic*)}},ref={A\arabic*}]
\item\label{A:ISS} Suppose $f(0,k(0)) = 0$ and that the closed loop system \eqref{eqn:nonlin_sys}-\eqref{eqn:control} is Input-to-State Stable (ISS) with respect to measurement error $x_e$. That is, there exists a smooth function $V: \mathbb{R}^n \rightarrow \mathbb{R}$ as well as class $\mathcal{K}_\infty$ functions\footnote{A continuous function $\alpha : [0, \infty) \rightarrow [0, \infty)$ is said to belong to the class $\mathcal{K}_{\infty}$ if it is strictly increasing, $\alpha(0) = 0$ and $\alpha(r) \rightarrow \infty$ as $r \rightarrow \infty$ \cite{khalil2002_book}.} $\alpha_1$, $\alpha_2$, $\alpha$ and $\gamma_i$ for each $i \in \{1, \ldots, n\}$, such that
\begin{align*}
&\alpha_1(\norm{x}) \leq V(x) \leq \alpha_2(\norm{x})\\
&\frac{\partial V}{\partial x} f(x, k(x+x_e)) \leq -\alpha(\norm{x}), \text{ if } \gamma_i(\norm{x_{i,e}}) \leq \norm{x}, \,\, \forall i.
\end{align*}
\item\label{A:lipschitz} The functions $f$, $k$ and $\gamma_i$, for each $i \in \{1, \ldots, n\}$, are Lipschitz on compact sets.
\setcounter{saveenum}{\value{enumi}}
\end{enumerate}

Note that the standard ISS assumption involves a single condition $\gamma(\norm{x_e}) \leq \norm{x}$ instead of the $n$ conditions: $\gamma_i(\norm{x_{i,e}}) \leq \norm{x}$, for $i \in \{1, \ldots, n\}$, in \eqref{A:ISS}. Given a function $\gamma(.)$ in the standard ISS assumption, one may define $\gamma_i(.)$ as
\begin{equation*}
\gamma_i(\norm{x_{i,e}}) = \gamma \left( \frac{\norm{x_{i,e}}}{\theta_i} \right), \quad i \in \{1, \ldots, n\}
\end{equation*}
where $\theta_i \in (0,1)$ such that $\displaystyle \theta^2 = \sum_{i=1}^n \theta_i^2 \leq 1$. Then, the $n$ conditions in \eqref{A:ISS} are equivalent to $\norm{x_{i,e}} \leq \theta_i \gamma^{-1}(\norm{x})$. Thus,
\begin{equation*}
\norm{x_e} = \sqrt{\sum_{i=1}^n \norm{x_{i,e}}^2} \leq \sqrt{\sum_{i=1}^n \theta_i^2 } \gamma^{-1}(\norm{x}) \leq \gamma^{-1}(\norm{x})
\end{equation*}
which is the condition in the standard ISS assumption. Similarly, given \eqref{A:ISS} one may pick $\gamma(.) = \gamma_i(.)$ for any $i$ to get the standard ISS assumption, although in practice it may be possible to choose a less conservative $\gamma(.)$.

In this section, our aim is to constructively show that decentralized asynchronous event-triggering can be used to asymptotically stabilize $x = 0$ (the trivial solution or the origin) with a desired region of attraction while also guaranteeing positive minimum inter-sample times. Further, without loss of generality, the desired region of attraction may be assumed to be a compact sub-level set $S(c)$ of the Lyapunov function $V$ in \eqref{A:ISS}. Specifically, $S(c)$ is defined as
\begin{equation}
S(c) = \{x \in \mathbb{R}^n : V(x) \leq c\} \label{eqn:S_def}
\end{equation}

\subsection{Centralized Asynchronous Event-Triggering}

The proposed design of decentralized asynchronous event-triggering progresses in stages. In the first stage, centralized event-triggers for asynchronous sampling of the sensors are proposed in the following lemma. One of the key steps in the result is choosing linear bounds on the functions $\gamma_i(.)$ on appropriately defined sets $E_i$. Given that $x \in S(c)$, we define the sets $E_i$ over which the error bounds in \eqref{A:ISS} are still satisfied, that is,
\begin{align}
E_i(c) &= \{ x_{i,e} \in \mathbb{R} : \norm{x_{i,e}} \leq \gamma_i^{-1}(\norm{x}), \,\, x \in S(c) \} \notag\\
&= \{ x_{i,e} \in \mathbb{R} : \norm{x_{i,e}} \leq \max_{x \in S(c)} \{ \gamma_i^{-1}(\norm{x}) \} \} \label{eqn:Ei_def}
\end{align}
In particular, since $x_i$ are each scalars,
\begin{equation*}
E_i(c) = [ -r_i(c), r_i(c) ], \ \ \text{with } r_i(c) = \max_{x \in S(c)} \{ \gamma_i^{-1}(\norm{x}) \}
\end{equation*}
Then, by \eqref{A:lipschitz}, for each $c \geq 0$ and each $i \in \{1, \ldots, n\}$, there exist positive constants $M_i(c)$ such that
\begin{equation}
\gamma_i( \norm{x_{i,e}} ) \leq \frac{1}{M_i(c)} \norm{x_{i,e}}, \,\, \forall x_{i,e} \in E_i(c) \label{eqn:M_def}
\end{equation}

\begin{lemma}
\label{lem:cent_async}
Consider the closed loop system \eqref{eqn:nonlin_sys}-\eqref{eqn:control} and assume \eqref{A:ISS} and \eqref{A:lipschitz} hold. Suppose that the event-triggers that determine the sampling instants, $\{ t_j^{x_i} \}$, for each $i \in \{1, \ldots, n \}$, ensure $\norm{x_{i,e}} \leq M_i(c) \norm{x}$ for all time $t \geq 0$, where $M_i(c)$ are given by \eqref{eqn:M_def} and $c \geq 0$ is an arbitrary constant. Then, the origin is asymptotically stable with $S(c)$, given by \eqref{eqn:S_def}, included in the region of attraction.
\end{lemma}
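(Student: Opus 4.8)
The plan is to show that the triggering rule $\norm{x_{i,e}} \le M_i(c)\norm{x}$ forces the hypothesis $\gamma_i(\norm{x_{i,e}}) \le \norm{x}$ of the ISS decrease condition in \ref{A:ISS} to hold for every $i$ at all times, so that $V$ acts as a genuine Lyapunov function on $S(c)$; asymptotic stability with $S(c)$ in the region of attraction then follows from a standard invariance argument. The heart of the matter is the chain
\begin{equation*}
\gamma_i(\norm{x_{i,e}}) \le \frac{1}{M_i(c)}\norm{x_{i,e}} \le \frac{1}{M_i(c)}\, M_i(c)\norm{x} = \norm{x},
\end{equation*}
where the first inequality is the linear over-bound \eqref{eqn:M_def} and the second is the triggering rule.

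The subtlety is that \eqref{eqn:M_def} is only valid for $x_{i,e} \in E_i(c)$, so first I would verify that the triggering rule keeps $x_{i,e}$ inside $E_i(c)$ whenever $x \in S(c)$. Since $x \in S(c)$ gives $\alpha_1(\norm{x}) \le V(x) \le c$, the state norm is bounded by $r_{\max} \triangleq \max_{x \in S(c)}\norm{x} \le \alpha_1^{-1}(c)$, and monotonicity of $\gamma_i^{-1}$ makes the radius of $E_i(c)$ equal to $\max_{x \in S(c)}\gamma_i^{-1}(\norm{x}) = \gamma_i^{-1}(r_{\max})$. Evaluating \eqref{eqn:M_def} at that boundary radius forces $M_i(c) \le \gamma_i^{-1}(r_{\max})/r_{\max}$, so the triggering rule yields $\norm{x_{i,e}} \le M_i(c)\norm{x} \le M_i(c)\, r_{\max} \le \gamma_i^{-1}(r_{\max})$, which is precisely $x_{i,e} \in E_i(c)$ by \eqref{eqn:Ei_def}. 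With membership established the displayed chain applies for every $i$, so the decrease condition in \ref{A:ISS} is active and $\frac{\partial V}{\partial x} f(x,k(x+x_e)) \le -\alpha(\norm{x})$ for all $x \in S(c)$.

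To close the argument I would run a positive-invariance/continuity argument to dissolve the apparent circularity, namely that membership in $E_i(c)$ was derived under the standing assumption $x \in S(c)$, while remaining in $S(c)$ is exactly what we must prove. Starting from $x(0) \in S(c)$, as long as the trajectory stays in $S(c)$ the above gives $\dot V \le -\alpha(\norm{x}) \le 0$ almost everywhere; here $V(x(t))$ is continuous and piecewise $C^1$ despite the resets of $x_e$ at the sampling instants, since those resets do not alter $x$ itself. Hence $V(x(t))$ is non-increasing, the sublevel set $S(c)$ is never left, and the hypotheses therefore hold for all $t \ge 0$. With $V$ positive definite (from $\alpha_1 \in \mathcal{K}_\infty$) and $\dot V$ strictly negative off the origin on the invariant set $S(c)$, the standard Lyapunov asymptotic stability theorem delivers asymptotic stability of the origin with $S(c)$ contained in the region of attraction.

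The step I expect to be the main obstacle is the first one, confirming that the measurement error cannot escape $E_i(c)$ so that the linear over-bound of $\gamma_i$ is legitimately available; this is where the choice of $M_i(c)$ in \eqref{eqn:M_def} is coupled to the size of the region of attraction through $r_{\max}$, and it is the only place where compactness of $S(c)$ and the Lipschitz assumption \ref{A:lipschitz} are genuinely used. Everything afterwards is routine Lyapunov bookkeeping.
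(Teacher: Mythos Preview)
Your proposal is correct and follows essentially the same approach as the paper: use the triggering bound together with \eqref{eqn:M_def} to recover the ISS hypothesis $\gamma_i(\norm{x_{i,e}}) \le \norm{x}$, then invoke the decrease condition $\dot V \le -\alpha(\norm{x})$ on $S(c)$ and conclude by Lyapunov. The paper's proof is terser and takes the membership $x_{i,e} \in E_i(c)$ and the positive invariance of $S(c)$ for granted, whereas you explicitly verify both; your bound $M_i(c) \le \gamma_i^{-1}(r_{\max})/r_{\max}$ obtained by evaluating \eqref{eqn:M_def} at the boundary of $E_i(c)$ is the right way to close that loop, and your handling of the apparent circularity via a continuity/invariance argument is exactly what the paper's phrase ``the flow of the closed loop system is dissipative on the sub-level set $S(c)$'' is implicitly relying on.
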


\begin{proof}
Suppose $x(0) \in S(c)$ is an arbitrary point, we have to show that the trajectory $x(.)$ asymptotically converges to zero. Next, by assumption, the sampling instants are such that for each $i \in \{1, \ldots, n \}$, $\norm{x_{i,e}} \leq M_i(c) \norm{x}$ for all time $t \geq 0$. Then, for all time $t \geq 0$, \eqref{eqn:M_def} implies
\begin{equation*}
\gamma_i( \norm{x_{i,e}} ) \leq \frac{1}{M_i(c)} \norm{x_{i,e}} \leq \norm{x}, \quad \forall x \in S(c)
\end{equation*}

Consider the ISS Lyapunov function $V(.)$ in \eqref{A:ISS}, which is a function of the state $x$. Letting
\begin{equation*}
\mathcal{M}(c) = \{ x \in S(c), \ x_e \in \mathbb{R}^n : \norm{x_{i,e}} \leq M_i(c) \norm{x}, \ \forall i \}
\end{equation*}
the time derivative of the function $V$ along the trajectories of the closed loop system, with a restricted domain, $\dot{V}(x,x_e): \mathcal{M}(c) \rightarrow \mathbb{R}$ can be upper-bounded as
\begin{equation*}
\dot{V}(x,x_e) \leq - \alpha(\norm{x}), \quad \forall [ x^T, x_e^T ]^T \in \mathcal{M}(c)
\end{equation*}
Thus, in particular, the closed loop system is dissipative on the sub-level set, $S(c)$, of the Lyapunov function $V$. Therefore, the origin is asymptotically stable with $S(c)$ included in the region of attraction.
\end{proof}

The lemma does not mention a specific choice of event-triggers but rather a family of them - all those that ensure the conditions $\norm{x_{i,e}} \leq M_i(c) \norm{x}$ are satisfied. Thus, any decentralized event-triggers in this family automatically guarantee asymptotic stability with the desired region of attraction. To enforce the conditions $\norm{x_{i,e}} \leq M_i(c) \norm{x}$ strictly, event-triggers at each sensor would need to know $\norm{x}$, which is possible only if we have centralized information. One obvious way to decentralize these conditions is to enforce $\norm{x_{i,e}} \leq M_i(c) \norm{x_i}$. However, such event-triggers cannot guarantee any positive lower bound for the inter-transmission times, which is not acceptable. So, we take an alternative approach, in which the next step is to derive lower bounds for the inter-transmission times when the conditions in Lemma \ref{lem:cent_async} are enforced strictly.

Before analyzing the lower bounds for the inter-transmission times that emerge from the event-triggers in Lemma \ref{lem:cent_async}, we introduce some notation. First, recall that under Assumption \eqref{A:ISS}, $f(0,k(0)) = 0$. Noting that for each $c \geq 0$ the set $S(c)$ contains the origin, Assumption \eqref{A:lipschitz} implies that  there exist Lipschitz constants $L(c)$ and $D(c)$ such that
\begin{equation}
\big|f(x,k(x+x_e))\big| \leq L(c) \norm{x} + D(c) \norm{x_e} \label{eqn:lip}
\end{equation}
for all $x \in S(c)$ and for all $x_e$ satisfying $\norm{x_{i,e}} \leq M_i(c) \norm{x}$, for each $i$. Similarly, there exist constants $L_i(c)$ and $D_i(c)$ for $i \in \{1, 2, \ldots, n\}$ such that
\begin{equation}
\big|f_i(x,k(x+x_e))\big| \leq L_i(c) \norm{x} + D_i(c) \norm{x_e} \label{eqn:lip_i}
\end{equation}
for all $x \in S(c)$ and for all $x_e$ satisfying $\norm{x_{i,e}} \leq M_i(c) \norm{x}$, for each $i$. Next, we introduce a function $\tau$ defined as
\begin{equation}
\tau(\omega,a_0,a_1,a_2) = \{t \geq 0 : \phi(t,0) = \omega \}\label{eqn:tau}
\end{equation}
where $a_0$, $a_1$, $a_2$ are non-negative constants and $\phi(t,\phi_0)$ is the solution of
\begin{equation*}
\dot{\phi} = a_0 + a_1 \phi + a_2 \phi^2, \quad \phi(0,\phi_0) = \phi_0
\end{equation*}
Lastly, the functions $\Gamma_i$ for $i \in \{1, \ldots, n\}$ are defined as
\begin{align}
\Gamma_i(w_i, W_i, c) \triangleq \tau(w_i,a_{0,i},a_{1,i},a_{2,i}) \label{eqn:Gamma_i}
\end{align}
where the function $\tau$ is given by \eqref{eqn:tau} and
\begin{align*}
&a_{0,i} = L_i(c) + D_i(c)W_i,\\
&a_{1,i} = L(c) + D_i(c) + D(c) W_i, \quad a_{2,i} = D(c)
\end{align*}
\begin{lemma}
\label{lem:tau_async_err}
Consider the closed loop system \eqref{eqn:nonlin_sys}-\eqref{eqn:control} and assume \eqref{A:lipschitz} holds. Let $c > 0$ be any arbitrary known constant. For $i \in \{1, \ldots, n \}$, let $0 < w_i \leq M_i(c)$ be any arbitrary constants and let $\displaystyle W_i = \sqrt{ \left( \sum_{j=1}^n w_j^2 \right) - w_i^2}$. Suppose the sampling instants are such that $\norm{x_{i,e}} \leq w_i \norm{x}$ for each $i \in \{1, \ldots, n \}$ for all time $t \geq t_0$. Finally, assume that $x(t_0)$ belongs to the compact set $S(c)$. Then, for all $t \geq t_0$, the time required for $\norm{x_{i,e}}$ to evolve from $0$ to $w_i \norm{x}$ is lower bounded by
\begin{align}
T_i &= \Gamma_i(w_i, W_i, c) > 0 \label{eqn:Ti}
\end{align}
where the functions $\Gamma_i$ are given by \eqref{eqn:Gamma_i}.
\end{lemma}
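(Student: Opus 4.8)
The plan is to track the evolution of the ratio $y_i \triangleq \norm{x_{i,e}}/\norm{x}$ and to show that it obeys a scalar Riccati-type differential inequality whose right-hand side is exactly $a_{0,i} + a_{1,i} y_i + a_{2,i} y_i^2$. A comparison argument against the defining equation of $\tau$ then lower-bounds the time for $y_i$ to travel from $0$ to $w_i$ by $\tau(w_i,a_{0,i},a_{1,i},a_{2,i})$.

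First I would record the elementary derivative estimates. Between transmissions $x_{i,e} = x_i(t_j^{x_i}) - x_i$, so $\dot{x}_{i,e} = -\dot{x}_i = -f_i(x,k(x+x_e))$, and the upper right-hand (Dini) derivative of the norm satisfies $D^+\norm{x_{i,e}} \leq \norm{f_i} \leq L_i(c)\norm{x} + D_i(c)\norm{x_e}$ by \eqref{eqn:lip_i}. Likewise, from $\dot{x} = f$ and \eqref{eqn:lip}, $D^+\norm{x} \geq -\norm{f} \geq -\big(L(c)\norm{x} + D(c)\norm{x_e}\big)$. The standing hypotheses $\norm{x_{j,e}}/\norm{x} \leq w_j$ together with $w_i \leq M_i(c)$ and $x \in S(c)$ ensure the error constraint $\norm{x_{i,e}}/\norm{x} \leq M_i(c)$ under which \eqref{eqn:lip}--\eqref{eqn:lip_i} are valid.

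The decoupling step is to eliminate $\norm{x_e}$ in favor of $\norm{x_{i,e}}$ and $\norm{x}$. Splitting off the $i$-th component and using $\norm{x_{j,e}} \leq w_j\norm{x}$ for $j \neq i$ gives $\norm{x_e}^2 = \norm{x_{i,e}}^2 + \sum_{j\neq i}\norm{x_{j,e}}^2 \leq \norm{x_{i,e}}^2 + W_i^2\norm{x}^2$, whence $\norm{x_e} \leq \norm{x_{i,e}} + W_i\norm{x}$ by subadditivity of the square root. I then substitute both derivative bounds into the quotient $\dot{y}_i = D^+\norm{x_{i,e}}/\norm{x} + \big(\norm{x_{i,e}}/\norm{x}^2\big)\big(-D^+\norm{x}\big)$, noting that the nonnegative coefficients $1/\norm{x}$ and $\norm{x_{i,e}}/\norm{x}^2$ preserve the inequalities. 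Collecting by powers of $y_i$: the constant term is $L_i(c) + D_i(c)W_i = a_{0,i}$; the linear term gathers the $D_i(c)$ from the numerator bound together with $L(c)$ and $D(c)W_i$ arising from the $-D^+\norm{x}$ contribution, giving $a_{1,i} = L(c) + D_i(c) + D(c)W_i$; and the quadratic term is $D(c) = a_{2,i}$. This yields $\dot{y}_i \leq a_{0,i} + a_{1,i} y_i + a_{2,i} y_i^2$. Since immediately after a transmission of sensor $i$ the error resets to $x_{i,e} = 0$, i.e. $y_i = 0$, the comparison lemma against the initial value problem $\dot{\phi} = a_{0,i} + a_{1,i}\phi + a_{2,i}\phi^2$, $\phi(0)=0$, that defines $\tau$ in \eqref{eqn:tau} gives $y_i(t) \leq \phi(t - t_j^{x_i})$; hence $y_i$ cannot reach $w_i$ before $\phi$ does, so the elapsed time is at least $T_i = \tau(w_i,a_{0,i},a_{1,i},a_{2,i})$. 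Strict positivity $T_i > 0$ follows because the comparison right-hand side is bounded on $[0,w_i]$, so the nonzero level $w_i$ is attained only after positive time.

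I expect the main obstacle to be the careful, sign-correct handling of the non-smooth norms: both $\norm{x}$ and $\norm{x_{i,e}}$ fail to be differentiable where they vanish, so the estimates must be phrased via upper Dini derivatives, and one must be vigilant that $D^+\norm{x}$ enters the quotient with a negative sign, meaning it is the \emph{lower} bound on $D^+\norm{x}$ (equivalently the upper bound on $-D^+\norm{x}$) that is needed. The remaining points to make watertight are the degenerate case $\norm{x}=0$ (the origin, where no triggering is required and the ratio is trivially controlled) and the justification of the $\sqrt{a^2+b^2} \leq a+b$ reduction.
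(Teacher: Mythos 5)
Your proposal is correct and follows essentially the same route as the paper's proof: the quotient-rule/Cauchy--Schwarz bound on $\mathrm{d}\nu_i/\mathrm{d}t$ with $\nu_i = \norm{x_{i,e}}/\norm{x}$, the elimination of $\norm{x_e}$ via $\norm{x_e}/\norm{x} \leq W_i + \nu_i$, the resulting Riccati-type inequality with exactly the coefficients $a_{0,i}, a_{1,i}, a_{2,i}$, and the comparison with $\phi$ from \eqref{eqn:tau}. Your extra care with Dini derivatives at points of non-differentiability is a refinement of the paper's briefer remark that the relation holds for all directional derivatives when $x_{i,e}=0$, not a different argument.
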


\begin{proof}
By assumption, $x(t_0)$ belongs to a known compact set $S(c)$ and $\norm{x_{i,e}} \leq w_i \norm{x} \leq M_i(c) \norm{x}$ for each $i$ for all time $t \geq t_0$. Then, Lemma \ref{lem:cent_async} guarantees that $x(t) \in S(c)$ for all time $t \geq t_0$. Thus, \eqref{eqn:lip} and \eqref{eqn:lip_i} hold for all $t \geq t_0$. Now, letting $\nu_i \triangleq \norm{x_{i,e}}/\norm{x}$ and by direct calculation we see that for $i \in \{1, \ldots, n\}$
\begin{align*}
\frac{\mathrm{d} \nu_i}{\mathrm{d}t} &= \frac{(x_{i,e}^T x_{i,e})^{-1/2} x_{i,e}^T \dot{x}_{i,e}}{\norm{x}} - \frac{x^T \dot{x} \norm{x_{i,e}}}{\norm{x}^3}\\
&\leq \frac{\norm{x_{i,e}} \norm{\dot{x}_{i,e}}}{\norm{x_{i,e}} \norm{x}} + \frac{\norm{x} \norm{\dot{x}} \norm{x_{i,e}}}{\norm{x}^3}\\
&\leq \frac{ L_i(c) \norm{x} + D_i(c)\norm{x_e}}{\norm{x}} + \frac{ \big( L(c) \norm{x} + D(c) \norm{x_e} \big) \norm{x_{i,e}}}{\norm{x}^2}
\end{align*}
where for $x_{i,e} = 0$ the relation holds for all directional derivatives. Next, notice that
\begin{align*}
\frac{ \norm{x_e}}{\norm{x}} &= \sqrt{\sum_{j=1}^{j=n}\nu_j^2} \leq \sqrt{ \left( \sum_{j=1}^{j=n}w_j^2 \right) - w_i^2 + \nu_i^2} \leq W_i + \nu_i
\end{align*}
where the condition that $\nu_i \leq w_i$, the definition of $W_i$ and the triangle inequality property have been utilized. Thus,
\begin{align*}
\frac{\mathrm{d} \nu_i}{\mathrm{d}t} &\leq L_i(c) + L(c) \nu_i + \big( D_i(c) + D(c) \nu_i \big) (W_i + \nu_i)\\
&= a_{0,i} + a_{1,i} \nu_i + a_{2,i} \nu_i^2
\end{align*}
The claim of the Lemma now directly follows.
\end{proof}

Now, by combining Lemmas \ref{lem:cent_async} and \ref{lem:tau_async_err}, we get the following result for the centralized asynchronous event-triggering.
\begin{theorem}
\label{thm:central_async}
Consider the closed loop system \eqref{eqn:nonlin_sys}-\eqref{eqn:control} and assume \eqref{A:ISS}-\eqref{A:lipschitz} hold. Suppose the $i^{\text{th}}$ sensor transmits its measurement to the controller whenever $\norm{x_{i,e}} \geq w_i \norm{x}$, where $0 < w_i \leq M_i(c)$, with $M_i(c)$ given by \eqref{eqn:M_def} and $c \geq 0$ any arbitrary constant. Then, the origin is asymptotically stable with $S(c)$ included in the region of attraction and the inter-transmission times of each sensor have a positive lower bound given by $T_i$ in \eqref{eqn:Ti}.
\end{theorem}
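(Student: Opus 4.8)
The plan is to observe that the stated triggering rule makes the two preceding lemmas immediately applicable, and to chain them in the correct logical order. First I would note that the rule ``transmit whenever $\norm{x_{i,e}}/\norm{x} \geq w_i$'' enforces, between consecutive transmissions of sensor $i$, the inequality $\norm{x_{i,e}}/\norm{x} \leq w_i$, and hence $\norm{x_{i,e}} \leq w_i \norm{x} \leq M_i(c)\norm{x}$ for all $t \geq 0$ and every $i$, since $0 < w_i \leq M_i(c)$. This is precisely the standing hypothesis of Lemma \ref{lem:cent_async}.

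Invoking Lemma \ref{lem:cent_async}, for any initial condition $x(0) \in S(c)$ the function $V$ is dissipative along the closed-loop flow, so $S(c)$ is forward invariant and the trajectory converges to the origin; this establishes the asymptotic stability claim with $S(c)$ as the region of attraction. Crucially, this same step also guarantees $x(t) \in S(c)$ for all $t \geq 0$, which supplies one of the hypotheses needed for the inter-transmission bound.

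For the lower bound I would apply Lemma \ref{lem:tau_async_err} on each inter-transmission interval of sensor $i$. Fixing a transmission instant $t_0 = t_j^{x_i}$, the error is reset there, so $x_{i,e}(t_0) = 0$ and $\nu_i(t_0) = \norm{x_{i,e}}/\norm{x} = 0$; the next transmission occurs exactly when $\nu_i$ first reaches $w_i$. Since the triggering rule keeps $\nu_i \leq w_i$ and Lemma \ref{lem:cent_async} keeps $x \in S(c)$ on this interval, both hypotheses of Lemma \ref{lem:tau_async_err} hold, and the time for $\nu_i$ to grow from $0$ to $w_i$ is bounded below by $T_i = \tau(w_i,a_{0,i},a_{1,i},a_{2,i}) > 0$. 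Because $T_i$ depends only on $c$ and the constants $w_j$ through $L(c), D(c), L_i(c), D_i(c), W_i$, it is a uniform lower bound on $t_{j+1}^{x_i} - t_j^{x_i}$, valid on every interval and independent of the particular trajectory.

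The step I expect to require the most care is the logical ordering rather than any computation: Lemma \ref{lem:tau_async_err} presupposes that $x$ remains in the compact set $S(c)$, which is itself a conclusion of Lemma \ref{lem:cent_async}, so the invariance of $S(c)$ must be established before the inter-transmission bound can be read off. The only genuinely delicate point is the behavior of $\nu_i$ as $x \to 0$, where the ratio is a priori ill-defined; here I would lean on the fact that the differential inequality $\dot{\nu}_i \leq a_{0,i} + a_{1,i}\nu_i + a_{2,i}\nu_i^2$ underlying Lemma \ref{lem:tau_async_err} holds throughout $S(c)$, so the uniform bound $T_i$ persists arbitrarily close to the origin and no \emph{Zeno} accumulation can occur.
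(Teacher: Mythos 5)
Your proposal is correct and follows essentially the same route as the paper's own proof: the triggering rule places the closed loop in the family covered by Lemma \ref{lem:cent_async}, which yields asymptotic stability and positive invariance of $S(c)$, and that invariance is exactly what licenses the application of Lemma \ref{lem:tau_async_err} to obtain the lower bound $T_i$ on each inter-transmission interval. The paper's proof is just a terser version of this same chain, with the same ordering of the two lemmas.
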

\begin{proof}
The triggering conditions ensure that $\norm{x_{i,e}} \leq w_i \norm{x} \leq M_i(c) \norm{x}$ for all $t > 0$. Thus, Lemma \ref{lem:cent_async} guarantees $x(t) \in S(c)$ for all $t \geq 0$ and that the origin is asymptotically stable with $S(c)$ included in the region of attraction. Since $S(c)$ is positively invariant, Lemma \ref{lem:tau_async_err} guarantees a positive lower bound for the inter-transmission times.
\end{proof}

\subsection{Decentralized Asynchronous Event-Triggering}
\label{sec:1way_com}

Now, turning to the main subject of this paper, in the decentralized sensing case, unlike in the centralized sensing case, no single sensor has knowledge of the exact value of $\norm{x}$ from the locally sensed data. We may let the event-trigger at the $i^{\text{th}}$ sensor enforce the more conservative condition $\norm{x_{i,e}} \leq w_i \norm{x_i}$ and still satisfy the assumptions of Lemma \ref{lem:cent_async}, though such a choice cannot guarantee a positive minimum inter-sample time. We overcome this problem through the following observation. The event-triggers in Theorem \ref{thm:central_async},
\begin{align}
&t_{j+1}^{x_i} = \min \{t \geq t_j^{x_i} : \norm{x_{i,e}} \geq w_i \norm{x} \}, \,\, i \in \{1,\ldots, n\} \label{eqn:cent_async_trig1}
\end{align}
can be equivalently expressed as
\begin{align}
&t_{j+1}^{x_i} = \min \{t \geq t_j^{x_i} + T_i : \norm{x_{i,e}} \geq w_i \norm{x} \} \label{eqn:cent_async_trig2}
\end{align}
where $T_i$ are the estimates of positive inter-sample times provided by Lemma \ref{lem:tau_async_err} in \eqref{eqn:Ti}. In the latter interpretation, a minimum dwell time is explicitly enforced, only after which, the state based condition is checked. Now, in order to let the event-triggers depend only on locally sensed data, one can let the sampling times, for $i \in \{1, \ldots, n\}$, be determined as
\begin{align}
&t_{j+1}^{x_i} = \min \{t \geq t_j^{x_i} + T_i : \norm{x_{i,e}} \geq w_i \norm{x_i} \} \label{eqn:event_trig_dec}
\end{align}
where $T_i$ are given by \eqref{eqn:Ti}. This allows us to implement decentralized asynchronous event-triggering.

\begin{remark}
\label{rem:wang_compare}
The event-triggers proposed in \cite{wang2009dist, wang2011} take the form of \eqref{eqn:event_trig_dec} with $T_i = 0$. Of course the parameters $w_i$ are computed in a different manner. Nevertheless, event-triggers of the form \eqref{eqn:event_trig_dec}, with $T_i = 0$, do not in general guarantee a positive lower bound for the inter-transmission times in the global or even the semi-global sense irrespective of whether the system is linear or nonlinear. This is due to the fact that, even under a weak coupling assumption, the set $\{x \in \mathbb{R}^n: x_i = 0\}$ is in general not an equilibrium set.

In \cite{wang2011} a modified version of the event-triggers is also presented, which may be interpreted as including timers. However, in contrast to our proposed scheme, the timers in \cite{wang2011} enforce a maximum inter-transmission time.
\end{remark}

\begin{remark}
\label{rem:compare_mazo2011}
Although at first sight our approach of explicitly enforcing a lower bound on inter-transmission times may seem similar to that of \cite{mazo2011}, there are important differences. The most important difference stems from the fact that in \cite{mazo2011}, the control input to the plant is always based on synchronously sampled data of the decentralized sensors. This allows the controller to utilize the lower threshold for inter-event times of the centralized control system, as that of \cite{tabuada2007}, to reject very closely timed requests by the sensors.

In contrast, our proposed methodology allows the controller to rely only on asynchronously sampled data. Further, the event-trigger at each of the sensors utilizes a lower threshold for the inter-transmission times. These differences necessitate the computation of the lower thresholds for the inter-transmission times as in Lemma \ref{lem:tau_async_err} instead of as in \cite{tabuada2007}.
\end{remark}

The following theorem is the core result of this paper and it shows that by appropriately choosing the constants $T_i$ and $w_i$, the event triggers, \eqref{eqn:event_trig_dec}, guarantee asymptotic stability of the origin while also explicitly enforcing a positive lower bound for inter-sample times.

\begin{theorem}
\label{thm:dec}
Consider the closed loop system \eqref{eqn:nonlin_sys}-\eqref{eqn:control} and assume \eqref{A:ISS} and \eqref{A:lipschitz} hold. Let $c \geq 0$ be an arbitrary known constant. For each $i \in \{1, 2, \ldots, n\}$, let $w_i$ be a positive constant such that $w_i \leq M_i(c)$, where $M_i(c)$ is given by \eqref{eqn:M_def} and $T_i$ be given by \eqref{eqn:Ti}. Suppose the sensors asynchronously transmit the measured data at time instants determined by \eqref{eqn:event_trig_dec} and that $t_0^{x_i} \leq 0$ for each $i \in \{1, 2, \ldots, n\}$. Then, the origin is asymptotically stable with $S(c)$ included in the region of attraction and the inter-transmission times of each sensor are explicitly enforced to have a positive lower threshold.
\end{theorem}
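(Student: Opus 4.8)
The plan is to separate the two assertions. The positive lower threshold on inter-transmission times is immediate: by construction, rule \eqref{eqn:event_trig_dec} enforces $t_{j+1}^{x_i} \ge t_j^{x_i} + T_i$, so every inter-transmission time of sensor $i$ is at least $T_i > 0$. All the real work therefore goes into asymptotic stability with region of attraction $S(c)$, and my strategy is to show that the decentralized rule \eqref{eqn:event_trig_dec}, despite replacing the unavailable $\norm{x}$ by the local $\norm{x_i}$, still maintains the centralized invariant $\nu_i(t) \triangleq \norm{x_{i,e}}/\norm{x} \le w_i$ for every $i$ and all $t \ge 0$. Once this invariant is established, $w_i \le M_i(c)$ means the hypotheses of Lemma \ref{lem:cent_async} hold for all time, and that lemma delivers asymptotic stability with $S(c)$ directly.

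To prove the invariant I would run a single ``first time of violation'' argument over all sensors simultaneously, setting $t^\ast = \sup\{\, T : x(t) \in S(c) \text{ and } \nu_i(t) \le w_i \ \forall i, \ \forall t \in [0,T] \,\}$, with the base case at $t=0$ supplied by the hypothesis $t_0^{x_i} \le 0$, which ensures each error has been reset at or before the initial instant so that $\nu_i(0) \le w_i$ and $x(0) \in S(c)$. Suppose, for contradiction, $t^\ast < \infty$. On $[0,t^\ast)$ all hypotheses of Lemma \ref{lem:cent_async} hold, so the computation in its proof gives $\dot V \le -\alpha(\norm{x}) \le 0$; hence $V(x(t)) \le V(x(0)) \le c$ and, by continuity, $x(t^\ast) \in S(c)$. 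Thus the state leaving $S(c)$ cannot be the cause of the breakdown, and the violation must be $\nu_i(t^\ast) = w_i$ for some $i$.

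I then split on the phase of sensor $i$ at $t^\ast$, letting $t_j^{x_i} \le t^\ast$ be its most recent transmission. If $t^\ast - t_j^{x_i} < T_i$ (dwell phase), then on $[t_j^{x_i}, t^\ast]$ every $\nu_k \le w_k$ and $x \in S(c)$, so Lemma \ref{lem:tau_async_err} applies and forbids $\nu_i$ from rising from $0$ (its reset value at $t_j^{x_i}$) to $w_i$ in less than $T_i$, contradicting $\nu_i(t^\ast)=w_i$. If instead $t^\ast - t_j^{x_i} \ge T_i$ (checking phase), the local condition in \eqref{eqn:event_trig_dec} is active and has not fired on $(t_j^{x_i}+T_i, t^\ast)$, so $\norm{x_{i,e}} < w_i\norm{x_i}$ there; using the elementary bound $\norm{x_i} \le \norm{x}$ this yields $\nu_i = \norm{x_{i,e}}/\norm{x} \le \norm{x_{i,e}}/\norm{x_i} < w_i$, so $\nu_i$ cannot reach $w_i$ without the trigger firing and resetting it, again a contradiction. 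Hence $t^\ast = \infty$, the invariant holds for all time, and Lemma \ref{lem:cent_async} gives the conclusion.

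I expect the dwell phase to be the main obstacle, since it is exactly there that the circular-looking dependence must be broken: Lemma \ref{lem:tau_async_err} needs $\nu_k \le w_k$ for \emph{all} $k$ in order to bound the growth of $\nu_i$, yet that is precisely what we are trying to prove. The device that resolves this is the simultaneous, all-sensor definition of $t^\ast$ together with continuity at $t^\ast$. The checking phase, by contrast, is easy once one exploits $\norm{x_i} \le \norm{x}$, which makes the decentralized threshold $w_i\norm{x_i}$ uniformly more conservative than the centralized $w_i\norm{x}$. The remaining technical points to nail down are the seamless handoff at $t_j^{x_i}+T_i$ between the two phases and the precise initialization permitted by $t_0^{x_i} \le 0$.
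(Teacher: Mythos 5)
Your proof is correct and follows essentially the same route as the paper's: split each inter-transmission interval into the dwell phase $[t_j^{x_i}, t_j^{x_i}+T_i]$, where Lemma \ref{lem:tau_async_err} keeps $\norm{x_{i,e}}/\norm{x} \leq w_i$, and the checking phase, where $\norm{x_i} \leq \norm{x}$ does, thereby placing the triggers \eqref{eqn:event_trig_dec} in the family covered by Lemma \ref{lem:cent_async}. Your explicit first-violation-time argument merely spells out the bootstrapping that the paper compresses into the phrase ``from the equivalence of \eqref{eqn:cent_async_trig1} and \eqref{eqn:cent_async_trig2}''; this is a welcome tightening of the same argument, not a different approach.
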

\begin{proof}
The statement about the positive lower threshold for inter-transmission times is obvious from \eqref{eqn:event_trig_dec} and only asymptotic stability remains to be proven. This can be done by showing that the event-triggers \eqref{eqn:event_trig_dec} are included in the family of event-triggers considered in Lemma \ref{lem:cent_async}. From the equivalence of \eqref{eqn:cent_async_trig1} and \eqref{eqn:cent_async_trig2}, it is clearly true that $\norm{x_{i,e}} \leq w_i \norm{x}$ for $t \in [t_j^{x_i}, t_j^{x_i} + T_i]$, for each $i \in \{1, 2, \ldots, n\}$ and each $j$. Next, for $t \in [t_j^{x_i} + T_i, t_{j+1}^{x_i}]$, \eqref{eqn:event_trig_dec} enforces $\norm{x_{i,e}} \leq w_i \norm{x_i}$, which implies $\norm{x_{i,e}} \leq w_i \norm{x}$ since $\norm{x_i} \leq \norm{x}$. Therefore, the event-triggers in \eqref{eqn:event_trig_dec} are included in the family of event-triggers considered in Lemma \ref{lem:cent_async}. Hence, $x \equiv 0$ (the origin) is asymptotically stable with $S(c)$ included in the region of attraction.
\end{proof}

\begin{remark}
Although the assumption that $t_0^{x_i} \leq 0$, for each $i$, in Theorem \ref{thm:dec} has not been used in the proof explicitly, it serves two key purposes - avoiding having the sensors send their first transmissions of data synchronously; and for the controller to have some latest sensor data to compute the controller output at $t = 0$.
\end{remark}

\begin{remark}
In Theorem \ref{thm:dec}, the parameters $w_i$ cannot be chosen in a decentralized manner unless $M_i(c)$ and hence $c$ is fixed \textit{a priori}. In other words, the desired region of attraction $S(c)$ has to be chosen at the time of the system installation. This can potentially lead to the parameters $w_i$ to be chosen conservatively to guarantee a larger region of attraction. One possible solution is to let the central controller communicate the parameters $w_i$ to the sensors at $t = 0$. In any case, for $t > 0$, the sensors only have to transmit their data to the controller and not have to receive any communication.
\end{remark}

Apart from the fact that the set $S(c)$ is chosen \textit{a priori}, conservativeness in transmission frequency may also be introduced because the Lipschitz constants of the  nonlinear functions $\gamma_i(.)$, \eqref{eqn:M_def}, are not updated after their initialization despite knowing that the system state is progressively restricted to smaller and smaller subsets of $S(c)$. Although we started from the idea that energy may be saved by making sure that sensors do not have to listen, the cost of increased transmissions may not be in its favor. Thus, we now describe a design where the central controller intermittently communicates updated $w_i$ and $T_i$ to the event-triggers.

\subsection{Decentralized Asynchronous Event-Triggering with Intermittent Communication from the Central Controller}
\label{sec:2way_com}

The basic idea of the design with bi-directional communication (sensors to controller and controller to the sensors) is simple. Whenever the central controller determines that the state is confined to a sufficiently small positively invariant subset of $S(c)$, the parameters $w_i$, $T_i$ and $c$ are updated, which the sensor nodes use as in the previous subsection. Of course, the controller has access only to the asynchronously sampled data, $x_s$. As a result, the controller can only determine an upper bound for the Lyapunov function of the system state.

Recall from the proof of Theorem \ref{thm:dec} that the event-triggers \eqref{eqn:event_trig_dec} ensure that $x(t) \in \mathcal{R}(x_s(t))$ for all $t \geq 0$, where
\begin{equation}
\mathcal{R}(x_s) \triangleq \{ x \in \mathbb{R}^n : \norm{x_{i,s} - x_i} \leq w_i \norm{x}, \ \forall i \in \{1, \ldots, n\} \} \label{eqn:calR}
\end{equation}
Since the controller knows the parameters used by each event-trigger, it may compute an upper bound for the Lyapunov function of the system state, given $x_s$, as
\begin{equation}
\mathcal{V}(x_s) \geq \max \{ V(x) : x \in \mathcal{R}(x_s) \} \label{eqn:calV}
\end{equation}
Note that the maximum exists and is finite for finite $x_s$ because $V(x)$ is a continuous function of $x$ and $\mathcal{R}$ is a closed and bounded set.

Now, let $\{ t_j^{\mathcal{V}} \}$ be the sequence of time instants at which $\mathcal{V}$ is sampled and the sensor event-trigger parameters $w_i$ and $T_i$ are updated. Then the idea here is to determine the sequence $\{ t_j^{\mathcal{V}} \}$ by an event-trigger running at the central controller, namely,
\begin{equation}
t_{j+1}^{\mathcal{V}} = \min \{t \geq t_j^{\mathcal{V}} + \mathcal{T} : \mathcal{V}(x_s(t)) \leq \rho \mathcal{V}(x_s(t_j^\mathcal{V})) \} \label{eqn:ts_calV}
\end{equation}
where $\mathcal{T} > 0$ (a positive dwell time) and $0 < \rho < 1$ are arbitrary constants. The initial condition $t_0^{\mathcal{V}} = 0$ and $\mathcal{V}(x_s(t_0^{\mathcal{V}})) = c$ may be chosen, where $c$ determines the region of attraction $S(c)$. Thus, with a slight abuse of notation, the `sampled' version of $\mathcal{V}$ is denoted by
\begin{equation}
\mathcal{V}_s(t) \triangleq \mathcal{V}(x_s(t_j^{\mathcal{V}})), \ \forall t \in [t_j^{\mathcal{V}}, t_{j+1}^{\mathcal{V}}), \ \mathcal{V}_s(0) = c \label{eqn:calVs}
\end{equation}
where $c > 0$ is an arbitrary constant, $t_j^{\mathcal{V}}$ are given by \eqref{eqn:ts_calV} and $\mathcal{V}$ is given by \eqref{eqn:calV}. Note that since $w_i$ and $T_i$ are updated at the time instants $t_j^{\mathcal{V}}$, they are time varying, piece-wise constant parameters, that is
\begin{equation}
w_i(t) = w_i(t_j^{\mathcal{V}}), \  T_i(t) = T_i(t_j^{\mathcal{V}}), \ \forall t \in [t_j^{\mathcal{V}}, t_{j+1}^{\mathcal{V}}) \label{eqn:wT_timevar}
\end{equation}
Note that the specification of $w_i(t_j^{\mathcal{V}})$ and $T_i(t_j^{\mathcal{V}})$ still remains. Let us also define the time varying piecewise constant auxiliary signals
\begin{equation}
W_i(t) = \sqrt{ \left( \sum_{j=1}^n w_j^2(t) \right) - w_i^2(t)} \label{eqn:Wi}
\end{equation}
The sensor transmissions are assumed to be triggered by the time varying analogue of \eqref{eqn:event_trig_dec}, that is
\begin{align}
&t_{j+1}^{x_i} = \min \{t \geq t_j^{x_i} + T_i(t) : \norm{x_{i,e}(t)} \geq w_i(t) \norm{x_i(t)} \} \label{eqn:ET_dec_sensors_bidir}
\end{align}

As one might expect from the results of the previous subsection, between any two updates of $\mathcal{V}_s$, that is for $t \in [t_j^{\mathcal{V}}, t_{j+1}^{\mathcal{V}})$ for any $j$, the Lyapunov function is guaranteed to decrease monotonously and the inter-transmission times of the $i^{\text{th}}$ sensor are lower bounded by $T_i(t_j^{\mathcal{V}}) = \Gamma_i(w_i(t_j^{\mathcal{V}}), W_i(t_j^{\mathcal{V}}), \mathcal{V}(t_j^{\mathcal{V}}))$, where the function $\Gamma_i$ is given by \eqref{eqn:Gamma_i}.

\begin{remark}
\label{rem:Gamma_i}
Note from the definition of $\Gamma_i$ that $\Gamma_i$ increases as $w_i$ increases, and as $W_i$ or $\mathcal{V}$ decrease.
\end{remark}

Thus, although $T_i(t_j^{\mathcal{V}}) > 0$ for each $i$ and each $j$, in the absence of further information one cannot rule out the possibility of $\displaystyle \lim_{j \rightarrow \infty} T_i(t_j^{\mathcal{V}}) \rightarrow 0$. Hence, let us make the following assumption.

\begin{enumerate}[label={\textbf{(A\arabic*)}},ref={A\arabic*}]
\setcounter{enumi}{\value{saveenum}}
\item\label{A:lim_gamma_i} $\displaystyle \lim_{s \rightarrow 0} ( \gamma_i(s)/s ) \rightarrow ( 1/g_i ) > 0$ for each $i \in \{1, \ldots, n\}$.
\end{enumerate}

\begin{remark}
\label{rem:M_L_D}
As $S(c_1) \subset S(c_2)$ if $c_1 \leq c_2$, $M_i(.)$ in \eqref{eqn:M_def} are assumed, without loss of generality, to be non-increasing functions of $c$. Similarly, $L_i(c)$, $D_i(c)$, $L(c)$ and $D(c)$ (defined in \eqref{eqn:lip}, \eqref{eqn:lip_i}) are assumed, without loss of generality, to be non-decreasing functions of $c$.
\end{remark}

\begin{theorem}
\label{thm:dec_wT}
Consider the closed loop system \eqref{eqn:nonlin_sys}-\eqref{eqn:control}, \eqref{eqn:ts_calV} and \eqref{eqn:ET_dec_sensors_bidir}. Suppose \eqref{A:ISS} and \eqref{A:lipschitz} hold. For each $i \in \{1, 2, \ldots, n\}$, let $w_i(t)$ and $T_i(t)$ be positive piecewise-constant signals satisfying \eqref{eqn:wT_timevar} with $T_i(t) = \Gamma_i(w_i(t), W_i(t), \mathcal{V}_s(t))$. Further, for each $i$, let $w_i(t)$ be non-decreasing in time satisfying $0 < w_i(t) \leq M_i(\mathcal{V}_s(t))$ for all $t \geq 0$. Suppose the sensors asynchronously transmit the measured data at time instants determined by \eqref{eqn:ET_dec_sensors_bidir} and that $t_0^{x_i} \leq 0$ for each $i \in \{1, 2, \ldots, n\}$. Then, the origin is asymptotically stable with $S(c)$ included in the region of attraction.
\end{theorem}

\begin{proof}
First, notice that $\mathcal{V}_s(t)$ is non-increasing in time. This implies that for each $i$,  $M_i(\mathcal{V}_s(t))$ is non-decreasing in time and a non-decreasing $w_i(t)$ satisfying $w_i(t) \leq M_i(\mathcal{V}_s(t))$ exists. Clearly, the Lyapunov function evaluated at the state of the system is at all times lesser than the piecewise constant and non-increasing signal $\mathcal{V}_s$. Thus, $x \in S(\mathcal{V}_s)$ at all times, where $S(.)$ is given by \eqref{eqn:S_def}. Given that $T_i(t) = \Gamma_i(w_i(t), W_i(t), \mathcal{V}_s(t))$ and sensor event-triggers given by \eqref{eqn:ET_dec_sensors_bidir}, one can conclude from arguments similar to those in Theorem \ref{thm:dec} that the origin of the closed loop system is asymptotically stable with $S(\mathcal{V}_s(0)) = S(c)$ included in the region of attraction.
\end{proof}

From \eqref{eqn:ts_calV}, it is clear that the inter-transmission times $\{ t_{j+1}^{\mathcal{V}} - t_j^{\mathcal{V}} \}$ are lower bounded by $\mathcal{T} > 0$. We would also like the inter-transmission times of each sensor to have a uniform positive lower bound.

\begin{corollary}[\textbf{Corollary to Theorem \ref{thm:dec_wT}}]
In addition, assume \eqref{A:lim_gamma_i} holds. Then, the inter-transmission times of each sensor has a uniform positive lower bound.
\end{corollary}

\begin{proof}
Since by assumption $w_i(t) > 0$ for all $t \geq 0$, Lemma \ref{lem:tau_async_err} guarantees that $T_i(t) = \Gamma_i(w_i(t), W_i(t), \mathcal{V}_s(t)) > 0$. Now we show that $T_i(t)$ has a uniform positive lower bound. Since the signal $\mathcal{V}_s$ is non-increasing, Remark \ref{rem:M_L_D} guarantees that $M_i(\mathcal{V}_s(t))$ are non-decreasing in time. Again by assumption in Theorem \ref{thm:dec_wT}, $w_i(t) > 0$ is non-decreasing in time such that $w_i(t) \leq M_i(\mathcal{V}_s(t))$ (note that such a $w_i(t)$ always exists). In other words, $w_i(t) \geq w_i(0)$ for all $t \geq 0$. Next, Assumption \eqref{A:lim_gamma_i} means that $\displaystyle \lim_{c \rightarrow 0} M_i(c) \leq g_i$. Then, from the definition of $M_i(c)$, in \eqref{eqn:M_def}, $M_i(c) \leq g_i$ for all $c \geq 0$. Thus, $w_i(t) \leq g_i$ for all $t \geq 0$ and there exist constants $h_i > 0$ such that $W_i(t) \leq h_i$ for all $t \geq 0$. Finally, Remark \ref{rem:Gamma_i} implies that for all $t \geq 0$,
\begin{equation*}
T_i(t) = \Gamma_i(w_i(t), W_i(t), \mathcal{V}_s(t)) \geq \Gamma_i(w_i(0), h_i, \mathcal{V}_s(0)) > 0
\end{equation*}
\end{proof}

\begin{remark}
Assumption \eqref{A:lim_gamma_i} is not restrictive and can be easily overcome if it is not satisfied, under the assumption that \eqref{A:lipschitz} holds. Assumption \eqref{A:lipschitz} implies that for each $\bar{s} > 0$, there exists a $g_i(\bar{s}) > 0$ such that $\gamma_i(s) \leq ( 1/g_i(\bar{s}) ) s$, for all $s \in [0, \bar{s}]$. If a function $\gamma_i$ does not satisfy the Assumption \eqref{A:lim_gamma_i} then one can simply let $\gamma_i(s) = ( 1/g_i(\bar{s}) ) s$, for all $s \in [0, \bar{s}]$ for some arbitrary $\bar{s} > 0$.
\end{remark}

\begin{remark}
Computing the upper bound on $V$, \eqref{eqn:calV}, may be computationally intensive depending on the Lyapunov function and the dimension of the system. However, since the Lyapunov function is guaranteed to decrease even with no updates to $w_i$ and $T_i$, there is no restriction on the time needed to compute the upper bound on $V$ and to update the parameters of the event-triggers. Offline computation and mapping different regions of state space with different values of $\mathcal{V}$ is also possible. On the other hand, it is true that the updates to all the event-triggers have to occur synchronously.
\end{remark}

\begin{remark}
\label{rem:wT}
Note that the aim of the event-triggers \eqref{eqn:ET_dec_sensors_bidir} is to ``approximately" enforce the conditions $\norm{x_{i,e}} \leq w_i(t) \norm{x}$. Thus whenever $w_i$ and $T_i$ are updated, the new parameters in the event-triggers are consistent with and are likely to be an improvement over the previous parameters. Although $w_i$ can be chosen to be non-decreasing in time, the same cannot be said about $T_i$, at least not without some restriction on $w_j(t)$ for each $j$.
\end{remark}

From \eqref{eqn:M_def}, \eqref{eqn:lip} and \eqref{eqn:lip_i}, it is clear that $M_i(c)$, $D(c)$ and $D_i(c)$ each capture very similar properties of the system - tolerance or sensitivity to the measurement errors in the whole set $S(c)$. Since $S(c_1) \subset S(c_2)$ if $c_1 \leq c_2$, the ``tolerance to measurement errors", $M_i(c)$, increases as $c$ gets smaller. Alternatively, ``sensitivity to the measurement errors", $D(c)$ and $D_i(c)$, decreases as $c$ gets smaller. As a result, non-decreasing $w_i(t)$ intuitively suggests non-decreasing $T_i(t)$ and non-decreasing ``average inter-transmission times". However, this statement is far from apparent in the logical sense, mainly because of the asynchronous transmissions. Nonetheless we can still say something concrete in certain cases.

\begin{corollary}[\textbf{Corollary to Theorem \ref{thm:dec_wT}}]
\label{cor:wT}
For as long as $D_i(\mathcal{V}_s(t)) W_i(t)$ and $D(\mathcal{V}_s(t)) W_i(t)$ are each non-increasing, $T_i(t)$ is non-decreasing.
\end{corollary}

\begin{proof}
As $\mathcal{V}_s(t)$ is a non-increasing signal, $L(\mathcal{V}_s(t))$, $D(\mathcal{V}_s(t))$, $L_i(\mathcal{V}_s(t))$ and $D_i(\mathcal{V}_s(t))$ are in turn non-increasing in time. In the definition of $\Gamma_i$, \eqref{eqn:Gamma_i}, it is seen that $W_i(t)$ appears in the terms $D_i(\mathcal{V}_s(t)) W_i(t)$ and $D(\mathcal{V}_s(t)) W_i(t)$. In addition $w_i(t) \leq M_i(\mathcal{V}_s(t))$ has been assumed to be non-decreasing. Thus, a simple application of the Comparison Lemma \cite{khalil2002_book} implies that $T_i(t)$ as defined in Theorem \ref{thm:dec_wT} is non-decreasing.
\end{proof}

Given a nonlinear system, a systematic characterization of the region of state space where the conditions of Corollary \ref{cor:wT} hold is an interesting problem and will be pursued in future. On the other hand, one may be able to choose $w_i(t) \leq M_i(\mathcal{V}_s(t))$ for each $i$ such that $T_i(t)$ are non-decreasing in time. The following result demonstrates the existence of $w_i(t)$ such that $T_i(t)$ is non-decreasing for each $i$, while also guaranteeing asymptotic stability.

\begin{corollary}[\textbf{Corollary to Theorem \ref{thm:dec_wT}}]
For each $i$, there exists a sequence of updates of $ w_i(t_j^{\mathcal{V}}) $ such that $T_i(t) = \Gamma_i(w_i(t), W_i(t), \mathcal{V}_s(t))$ is non-decreasing in time.
\end{corollary}

\begin{proof}
Since for each $j$, $\mathcal{V}(t_{j+1}^{\mathcal{V}}) < \mathcal{V}(t_j^{\mathcal{V}})$, a consequence of Remark \ref{rem:Gamma_i} is that
\begin{equation*}
\Gamma_i(w_i(t_j^{\mathcal{V}}), W_i(t_j^{\mathcal{V}}), \mathcal{V}(t_{j+1}^{\mathcal{V}})) \geq \Gamma_i(w_i(t_j^{\mathcal{V}}), W_i(t_j^{\mathcal{V}}), \mathcal{V}(t_j^{\mathcal{V}}))
\end{equation*}
Thus for each $i$, letting $w_i(t) \equiv w_i(0) \leq M_i(c)$ and consequently, $W_i(t) \equiv W_i(0)$ the corollary is seen to be true.
\end{proof}

In general for nonlinear systems, $\Gamma_i(w_i, W_i, \mathcal{V}(t_{j+1}^{\mathcal{V}}))$ would be strictly increasing with decreasing $\mathcal{V}(t_{j+1}^{\mathcal{V}})$. Further, although the corollary has been proved using the trivial choice of $w_i(t) \equiv w_i(0)$ and $W_i(t) \equiv W_i(0)$, in practice one could choose $\{ w_i(t_{j+1}^{\mathcal{V}}) \}_{i=1}^n $ as
\begin{equation*}
w_i(t_{j+1}^{\mathcal{V}}) \in [ w_i(t_j^{\mathcal{V}}), M_i(t_j^{\mathcal{V}}) ], \ \text{s.t. } T_i(t_{j+1}^{\mathcal{V}}) \geq T_i(t_j^{\mathcal{V}}), \ \forall i
\end{equation*}
Further, for each $i$, $T_i(t_{j+1}^{\mathcal{V}})$ could be chosen to be as large as possible. Clearly, this is a multi-objective resource allocation problem and has to be studied rigorously. Further, it may also be possible to design a single-objective function based on the aim of overall reduction of inter-transmission times.

Note that we have not really addressed mathematically the question of whether there is an improvement in the average inter-transmission times. Non-decreasing $T_i(t)$ do not necessarily on their own mean longer average inter-transmission times. In fact, comparing across different choices of the parameters of the control system is equivalent to comparing different closed loop systems. Although the asynchronous nature of the transmissions is a major impediment to the quantification of any improvement in the inter-transmission times, at a more fundamental level it is due to the lack of tools to compare the ``average" inter-transmission times across different aperiodic sampled-data control systems. The development of such tools is a problem in its own right and not within the scope of this paper.

\section{Linear Time Invariant Systems}
\label{sec:lin_sys}

Now, let us consider the special case of Linear Time Invariant (LTI) systems with quadratic Lyapunov functions. Thus, the system dynamics may be written as
\begin{align}
\dot{x} &= A x + B u, \quad x \in \mathbb{R}^n, \quad u \in \mathbb{R}^m \label{eqn:lin_sys}\\
u &= K(x + x_e) \label{eqn:lin_con}
\end{align}
where $A$, $B$ and $K$ are matrices of appropriate dimensions. As in the general case, let us assume that for each $i \in \{1, 2, \ldots, n\}$, $x_i \in \mathbb{R}$ is sensed by the $i^{\text{th}}$ sensor. Comparing with \eqref{eqn:lin_sys}-\eqref{eqn:lin_con} we see that $x_i$ evolves as
\begin{equation}
\dot{x}_i = r_i(A) x + r_i(BK) (x + x_e)
\end{equation}
where the notation $r_i(H)$ denotes the $i^{\text{th}}$ row of the matrix $H$. Also note that $x_e$ and $x_{i,e}$ are defined just as in Section \ref{sec:prob_setup}.

Now, suppose the matrix $(A+BK)$ is Hurwitz, which is equivalent to the following statement.
\begin{enumerate}[label={\textbf{(A\arabic*)}},ref={A\arabic*}]
\setcounter{enumi}{\value{saveenum}}
\item\label{A:lyap_eq} Suppose that for any given symmetric positive definite matrix $Q$, there exists a symmetric positive definite matrix $P$ such that $P(A+BK) + (A+BK)^T P = - Q$.
\end{enumerate}

Then, the following result is obtained as a special case of Theorem \ref{thm:dec} and prescribes the constants $T_i$ and $w_i$ in the event triggers, \eqref{eqn:event_trig_dec}, that guarantee global asymptotic stability of the origin while also explicitly enforcing a positive minimum inter-sample time.

\begin{theorem}
\label{thm:lin_dec}
Consider the closed loop system \eqref{eqn:lin_sys}-\eqref{eqn:lin_con} and assume \eqref{A:lyap_eq} holds. Let $Q$ be any symmetric positive definite matrix and let $Q_m$ be the smallest eigenvalue of $Q$. For each $i \in \{1, 2, \ldots, n\}$, let
\begin{align}
&\theta_i \in (0,1)\quad \text{ s.t. }\,\, \displaystyle \theta = \sum_{i=1}^n \theta_i \leq 1 \label{eqn:theta_i}\\
&\displaystyle w_i = \frac{\sigma \theta_i Q_m}{\norm{c_i(2PBK)}} \label{eqn:wi_lin}
\end{align}
where $0 < \sigma < 1$ is a constant and $c_i(2PBK)$ is the $i^{\text{th}}$ column of the matrix $(2PBK)$. Let $T_i$ be defined as
\begin{align}
&T_i = \tau(w_i,a_0,a_1,a_2) \label{eqn:Ti_lin}
\end{align}
where the function $\tau$ is given by \eqref{eqn:tau} and
\begin{align*}
&a_0 = \norm{r_i(A+BK)} + \norm{r_i(BK)}W_i, \\
& a_1 = \norm{A+BK} + \norm{r_i(BK)} + \norm{BK} W_i, \quad a_2 = \norm{BK}
\end{align*}
where $\displaystyle W_i = \sqrt{ \left( \sum_{j=1}^n w_j^2 \right) - w_i^2}$. Suppose the sensors asynchronously transmit the measured data at time instants determined by \eqref{eqn:event_trig_dec}. Then, the origin is globally asymptotically stable and the inter-transmission times are explicitly enforced to have a positive lower threshold. \qed
\end{theorem}

In the context of the results for nonlinear systems in Section \ref{sec:dec_async_event_trig}, the reason we are able to achieve global asymptotic stability for LTI systems is because, the system dynamics, the functions $\gamma_i(.)$ are globally Lipschitz, thus giving us constants $w_i$  and $T_i$ that hold globally. In fact, for linear systems, something more is ensured - the proposed asynchronous event-triggers guarantee a type of scale invariance.

Scaling laws of inter-execution times for centralized \textit{synchronous} event-triggering have been studied in \cite{anta2010}. In particular, Theorem 4.3 of \cite{anta2010}, in the special case of linear systems, guarantees scale invariance of the inter-execution times determined by a centralized event-trigger $\norm{x_e} = W \norm{x}$. The centralized and decentralized asynchronous event-triggers developed in this paper are under-approximations of this kind of central event-triggering. In the following, we show that the scale invariance is preserved in the asynchronous event-triggers. As an aside, we would like to point out that the decentralized event-triggers proposed in \cite{mazo2011async, mazo2012, mazo2012arxiv} are not scale invariant. In order to precisely state the notion of scale invariance and to state the result the following notation is useful. Let $x(t)$ and $z(t)$ be two solutions to the system: \eqref{eqn:lin_sys}-\eqref{eqn:lin_con} along with the event-triggers \eqref{eqn:event_trig_dec}.

\begin{theorem}
\label{thm:lin_scale_inv}
Consider the closed loop system \eqref{eqn:lin_sys}-\eqref{eqn:lin_con} and assume \eqref{A:lyap_eq} holds. Let $Q$ be any symmetric positive definite matrix and let $Q_m$ be the smallest eigenvalue of $Q$. For each $i \in \{1, 2, \ldots, n\}$, let $\theta_i$, $w_i$ and $T_i$ be defined as in \eqref{eqn:theta_i}, \eqref{eqn:wi_lin} and \eqref{eqn:Ti_lin}, respectively. Suppose the sensors asynchronously transmit the measured data at time instants determined by \eqref{eqn:event_trig_dec}. Assuming $b$ is any scalar constant, let $[z(0)^T, z_s(0)^T]^T = b [x(0)^T, x_s(0)^T]^T \in \mathbb{R}^n \times \mathbb{R}^n$ be two initial conditions for the system. Further let $t_0^{z_i} = t_0^{x_i} \leq 0$ for each $i \in \{1, \ldots, n\}$. Then, $[z(t)^T, z_s(t)^T]^T = b [x(t)^T, x_s(t)^T]^T$ for all $t \geq 0$ and $t_j^{x_i} = t_j^{z_i}$ for each $i$ and $j$.
\end{theorem}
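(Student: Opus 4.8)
The plan is to exploit two homogeneity properties of the closed-loop system together with uniqueness of solutions to a linear ODE, and then to close an induction over the merged sequence of all transmission events. First I would rewrite the closed-loop dynamics in the form $\dot{x} = Ax + BK x_s$, where $x_s$ is the piecewise-constant held measurement (since $u = K(x+x_e) = K x_s$). This flow is \emph{linear} in the pair $(x,x_s)$, so on any interval on which $x_s$ is constant, the candidate $(\zeta,\zeta_s) := (bx, bx_s)$ satisfies $\frac{\mathrm{d}}{\mathrm{d}t}(bx) = A(bx) + BK(bx_s)$, i.e.\ exactly the same differential equation one would write for $(z,z_s)$.

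Next I would record that the decentralized trigger \eqref{eqn:event_trig_dec} is positively homogeneous of degree one in the state. The condition $\norm{x_{i,e}} \geq w_i \norm{x_i}$ scales to $\norm{b\,x_{i,e}} = |b|\norm{x_{i,e}} \geq w_i |b| \norm{x_i} = w_i \norm{b\,x_i}$, which is \emph{identical} for every $b \neq 0$, because $w_i$ and $T_i$ are fixed constants (from \eqref{eqn:theta_i}, \eqref{eqn:wi_lin}, \eqref{eqn:Ti_lin}) that do not depend on the trajectory. Since the trigger uses only Euclidean norms, the sign of $b$ is irrelevant, and $b=0$ is trivial. Hence, given histories that agree up to the factor $b$, sensor $i$ fires at precisely the same instant in both systems, and the dwell constraint $t \geq t_j^{x_i} + T_i$ is enforced identically.

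Then I would run an induction on the merged, time-ordered sequence of all transmission events $\bigcup_i \{t_j^{x_i}\}$. The base case uses the hypotheses $t_0^{z_i} = t_0^{x_i} < 0$ and $[z(0)^T, z_s(0)^T]^T = b[x(0)^T, x_s(0)^T]^T$, so the trajectories agree up to $b$ together with their last-held measurements at $t=0$. For the inductive step, assume that up to some event time all transmission instants coincide and $(z,z_s) = (bx, bx_s)$. On the following inter-event interval, $x_s$ (hence $bx_s$) is constant, so by the first step $bx$ solves the same linear ODE as $z$ with the same scaled initial data; uniqueness of solutions forces $z = bx$ there. By the second step the next sensor to fire and its firing time are identical for both systems, and the induced jump $x_{i,s} \mapsto x_i(t_{j+1}^{x_i})$ scales to $z_{i,s} \mapsto b\,x_i(t_{j+1}^{x_i}) = z_i(t_{j+1}^{z_i})$, preserving $z_s = bx_s$. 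This closes the induction and delivers $t_j^{x_i} = t_j^{z_i}$ for all $i,j$ and $[z(t)^T, z_s(t)^T]^T = b[x(t)^T, x_s(t)^T]^T$ for all $t \geq 0$.

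The main obstacle is organizational rather than analytical: because the sensors are asynchronous, events of different sensors interleave arbitrarily, so the induction must be carried out over the merged event sequence. Here one should invoke Lemma \ref{lem:lin_tau_async}, which guarantees each $T_i > 0$, so that every sensor fires only finitely often on any finite horizon and the merged sequence is well-ordered with no accumulation point; this makes the event-by-event induction legitimate. It is also worth stating explicitly that, for the LTI case, the $T_i$ are state-independent constants — precisely the feature that makes the dwell time scale-invariant, and the reason this argument succeeds here even though triggers with measurement-dependent dwell times would break it.
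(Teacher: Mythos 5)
Your proposal is correct and follows essentially the same route as the paper: an induction over the merged sequence of transmission events, using linearity of the inter-event flow to propagate $z = bx$ and the degree-one homogeneity of the trigger (with state-independent $w_i$, $T_i$) to force the event times to coincide. Your added remarks on non-accumulation of events and the sign of $b$ are minor refinements of the same argument, not a different approach.
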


\begin{proof}
First of all, let us introduce two strictly increasing sequences of time, $\{t_j^{z_s}\}$ and $\{t_j^{x_s}\}$, at which one or more components of $z_s$ and $x_s$ are updated, respectively. Further, without loss of generality, assume $t_0^{z_s} = t_0^{x_s}$. The proof proceeds by mathematical induction. Let us suppose that $t_j^{z_s} = t_j^{x_s} = t_j$ for each $j \in \{0, \ldots, k\}$ and that $[z(t)^T, z_s(t)^T]^T = b [x(t)^T, x_s(t)^T]^T$ for all $t \in [0, t_k)$. Then, letting $\underline{t}_{k+1} = \min \{ t_{k+1}^{z_s}, t_{k+1}^{x_s} \}$ the solution, $z$, in the time interval $[t_k, \underline{t}_{k+1})$ satisfies
\begin{align*}
z(t) &= e^{A(t-t_k)} z(t_k) + \int_{t_k}^{t} e^{A(t-\sigma)} BK z_s(t_k) \mathrm{d} \sigma \\
&= b e^{A(t-t_k)} x(t_k) + b \int_{t_k}^{t} e^{A(t-\sigma)} BK x_s(t_k) \mathrm{d} \sigma
\end{align*}
Hence,
\begin{equation}
z(t) = b x(t), \quad \forall t \in [t_k, \underline{t}_{k+1}) \label{eqn:scal_inv_x}
\end{equation}
Further, in the time interval $[t_k, \underline{t}_{k+1})$
\begin{align}
z_{i,e}(t) = z_i(t_k) - z_i(t) = b ( x_i(t_k) - x_i(t) ) = b x_{i,e}(t) \label{eqn:scal_inv_xe}
\end{align}
Similarly, for all $t \in [t_k, \underline{t}_{k+1})$,
\begin{equation}
\frac{\norm{z_{i,e}(t)}}{\norm{z(t)}} = \frac{\norm{x_{i,e}(t)}}{\norm{x(t)}} \label{eqn:scal_inv}
\end{equation}

Without loss of generality, assume $z_{i,s}$ is updated at $\underline{t}_{k+1}$. Then, clearly, at least $T_i$ amount of time has elapsed since $z_{i,s}$ was last updated. Next, by the assumption that $t_0^{z_i} = t_0^{x_i} \leq 0$ and the induction statement, it is clear that at least $T_i$ amount of time has elapsed since $x_{i,s}$ also was last updated. Further, it also means that $\norm{z_{i,s}(t_k) - z_i(\underline{t}_{k+1})} \geq w_i \norm{z_i(\underline{t}_{k+1})}$. Then, \eqref{eqn:scal_inv_x}-\eqref{eqn:scal_inv_xe} imply that $\norm{x_{i,s}(t_k) - x_i(\underline{t}_{k+1})} \geq w_i \norm{x_i(\underline{t}_{k+1})}$, meaning $\underline{t}_{k+1} = t_{k+1}^{z_s} = t_{k+1}^{x_s} = t_{k+1}$. Arguments analogous to the preceding also hold for multiple $z_{i,s}$ updated at $\underline{t}_{k+1}$ instead of one or even $x_{i,s}$ instead of $z_{i,s}$. Since the induction statement is true for $k = 0$, we conclude that the statement of theorem is true.
\end{proof}

\begin{remark}
From the proof of Theorem \ref{thm:lin_scale_inv}, \eqref{eqn:scal_inv} specifically, it is clear that the centralized asynchronous event-triggers \eqref{eqn:cent_async_trig1} also guarantee scale invariance.
\end{remark}

\begin{remark}
Scale invariance, as described in Theorem \ref{thm:lin_scale_inv}, means that the inter-transmission times over an arbitrary length of time is independent of the scale (or the magnitude) of the initial condition of the system. Similarly for any given scalar, $0 < \delta < 1$, the time and the number of transmissions it takes for $\norm{x(t)}$ to reduce to $\delta \norm{x(0)}$ is independent of $\norm{x(0)}$. So, the advantage is that the `average' network usage remains the same over large portions of the state space.
\end{remark}

\section{Simulation Results}
\label{sec:sim_results}

In this section, the proposed decentralized asynchronous event-triggered sensing mechanism is illustrated with two examples. The first is a linear system and the second a nonlinear system.

\subsection{Linear System Example}

We first present the mechanism for a linearized model of a batch reactor, \cite{walsh2001}. The plant and the controller are given by \eqref{eqn:lin_sys}-\eqref{eqn:lin_con} with
\begin{align*}
&A = \begin{bmatrix} 1.38 & -0.20 &6.71 & -5.67\\ -0.58 & -4.29 & 0 & 0.67\\ 1.06 & 4.27 & -6.65 & 5.89\\ 0.04 & 4.27 & 1.34 & -2.10\end{bmatrix}\\
&B = \begin{bmatrix} 0 & 0\\ 5.67 & 0\\ 1.13 & -3.14\\ 1.13 & 0 \end{bmatrix}\\
&K = - \begin{bmatrix} 0.1006 & -0.2469 & -0.0952 & -0.2447 \\ 1.4099 & -0.1966 & 0.0139 & 0.0823 \end{bmatrix}
\end{align*}
which places the eigenvalues of the matrix $(A+BK)$ at around $\{ -2.98 + 1.19i, -2.98 - 1.19i, -3.89, -3.62 \}$. The matrix $Q$ was chosen as the identity matrix. The system matrices and $Q$ have been chosen to be the same as in \cite{mazo2011async}. Lastly, the controller parameters were chosen as $[ \theta_1, \theta_2, \theta_3, \theta_4 ] = [0.6, 0.17, 0.08, 0.15]$ and $\sigma = 0.95$. For the simulations presented here, the initial condition of the plant was selected as $x(0) = [4, 7, -4, 3]^T$ and the initial sampled data that the controller used was $x_s(0) = [4.1, 7.2, -4.5, 2]^T$. The zeroth sampling instant was chosen as $t_0^{x_i} = - T_i$ for sensor $i$. This is to ensure sampling at $t = 0$ if the local triggering condition was satisfied. Finally the simulation time was chosen as $T_{sim} = 10\text{s}$.

Figures \ref{fig:br_dec_ET_V} and \ref{fig:br_dec_ET_Vdot} show the evolution of the Lyapunov function and its derivative along the trajectory of the closed loop system, respectively. Figure \ref{fig:br_dec_ET_intersamp} shows the time evolution of the inter-transmission times for each sensor. The frequency distribution of the inter-transmission times is another useful metric to understand the closed loop event-triggered system. Thus, given a time interval of interest $[0, \mathcal{T}_{\mathrm{INT}}]$ consider
\begin{align*}
\mathcal{N}^{x_i}( \mathcal{T, T}_{\mathrm{INT}} ) = \Big\{ j \in \mathbb{N}_0 : \ &t^{x_i}_{j+1} \in [0, \mathcal{T}_{\mathrm{INT}}] \quad \mathrm{and} \\
 &( t^{x_i}_{j+1} - t^{x_i}_j ) \in [0, \mathcal{T}] \Big\}
\end{align*}
where $\mathbb{N}_0 = \{0, 1, 2, \ldots \}$ is the set of natural numbers. Hence, the cumulative distribution of the inter-transmission times during $[0, \mathcal{T}_{\mathrm{INT}}]$ is given as
\begin{equation}
\mathcal{D}^{x_i}( \mathcal{T, T}_{\mathrm{INT}} ) = \frac{\# \mathcal{N}^{x_i}( \mathcal{T, T}_{\mathrm{INT}} )} {\# \mathcal{N}^{x_i}( \mathcal{T}_{\mathrm{INT}}, \mathcal{T}_{\mathrm{INT}} )}
\end{equation}
where $\#$ denotes the cardinality of a set. Figure \ref{fig:br_dec_ET_intersamp_dist} shows the cumulative frequency distribution of the inter-transmission times, $\mathcal{D}^{x_i}( \mathcal{T}, T_{sim} )$, for each sensor. The cumulative frequency distribution of the inter-transmission times is a measure of the performance of the event-triggers. A distribution that rises sharply to $100\%$ indicates that event-trigger is not much better than a time-trigger. Thus, slower the rise of the cumulative distribution curves, greater is the justification for using the event-trigger instead of a time-trigger. Note that an implication of the result on scale invariance, Theorem \ref{thm:lin_scale_inv}, is that the cumulative frequency distribution, such as in Figure \ref{fig:br_dec_ET_intersamp_dist} depends only on the ``phase" of the initial condition and not on its magnitude.

\begin{figure*}[!htb]
\centering
\subfloat[]{\label{fig:br_dec_ET_V}\includegraphics[width=0.25\textwidth]{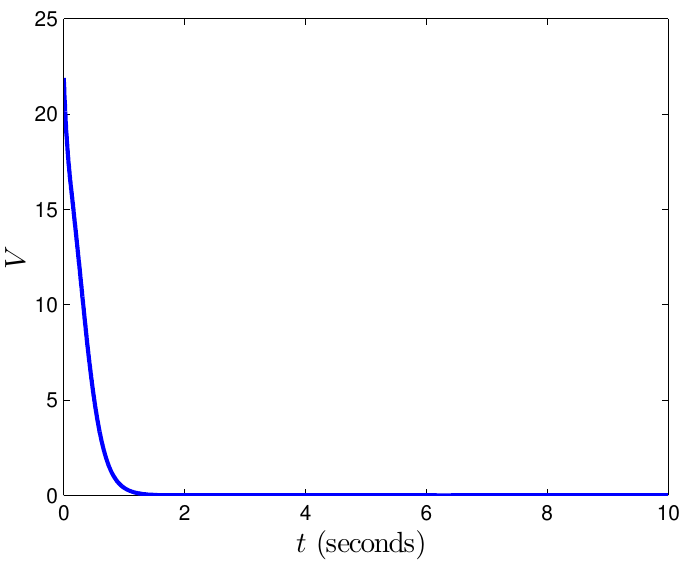}}
\subfloat[]{\label{fig:br_dec_ET_Vdot}\includegraphics[width=0.25\textwidth]{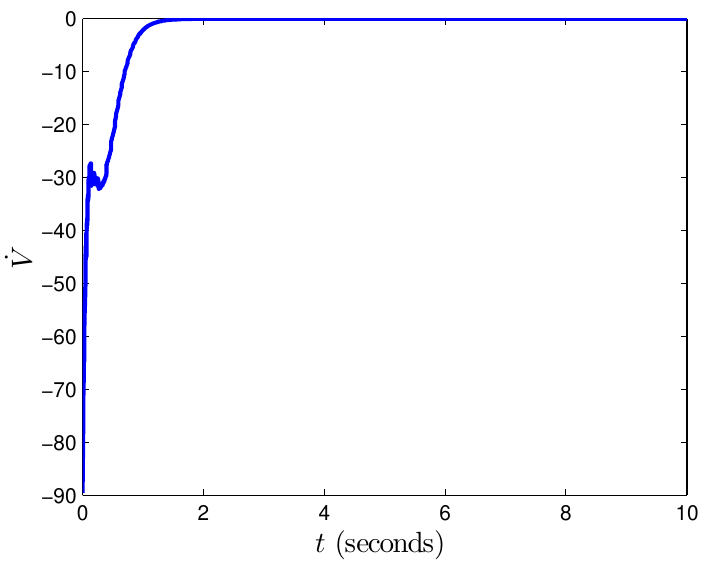}}
\subfloat[]{\label{fig:br_dec_ET_intersamp}\includegraphics[width=0.25\textwidth]{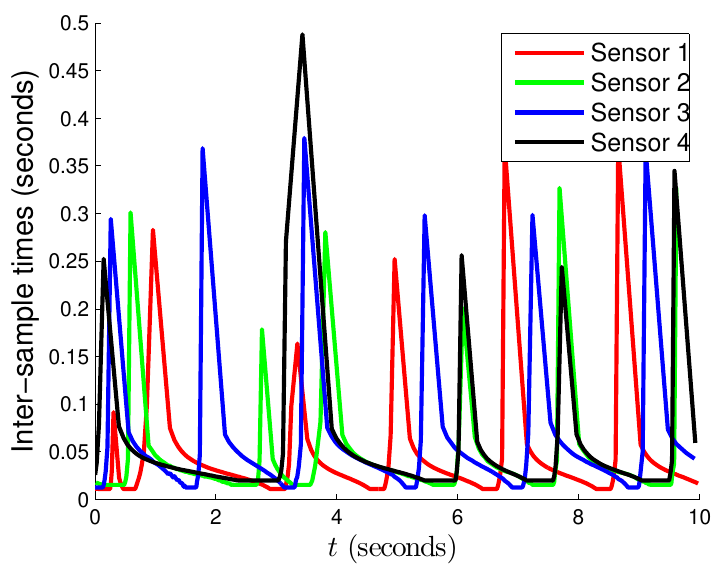}}
\subfloat[]{\label{fig:br_dec_ET_intersamp_dist}\includegraphics[width=0.25\textwidth]{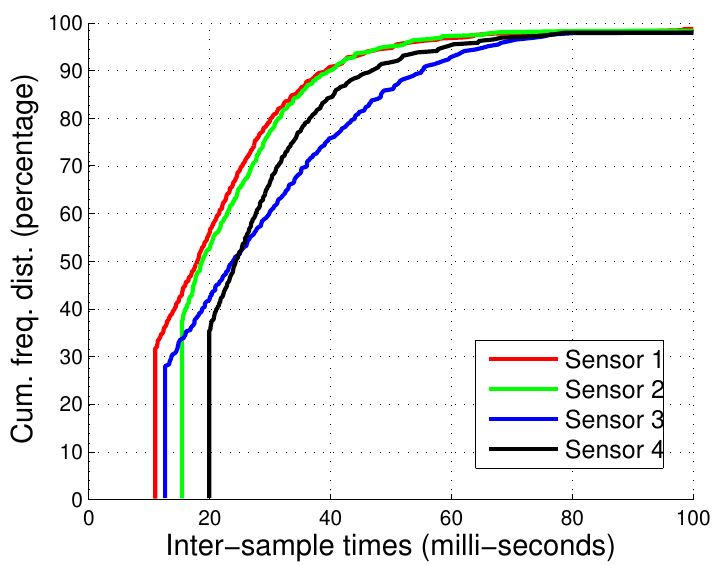}}
\caption{Batch reactor example: evolution of the (a) Lyapunov function, (b) time derivative of Lyapunov function, along the trajectories of the closed loop system. (c) Sensor inter-transmission times (d) cumulative frequency distribution of the sensor inter-transmission times, $\mathcal{D}^{x_i}( \mathcal{T}, T_{sim} )$, where $T_{sim} = 10s$ is the simulation time.}
\end{figure*}

The minimum thresholds for the inter-transmission times $T_i$ for the example can be computed as in \eqref{eqn:Ti_lin} and have been obtained as $[ T_1, T_2, T_3, T_4 ] = [11, 15.4, 12.6, 19.9]\text{ms}$, which are also the minimum inter-transmission times in the simulations presented here. These numbers are a few orders of magnitude higher and an order higher than the guaranteed minimum inter-transmission times and the observed minimum inter-transmission times in \cite{mazo2011async, mazo2012}. The average inter-transmission times obtained in the presented simulations were $[ \bar{T}_1,\bar{T}_2,\bar{T}_3,\bar{T}_4 ] = [ 24.9, 27.7, 34.5, 34.2 ]\text{ms}$, which are about an order of magnitude lower than those reported in \cite{mazo2011async, mazo2012}. A possible explanation for this phenomenon is that in \cite{mazo2011async, mazo2012}, the average inter-transmission times depends quite critically on the evolution of the threshold $\eta$. Although the controller gain matrix $K$ and the matrix $Q$ have been chosen to be the same, by inspection of the plots in \cite{mazo2011async, mazo2012}, it appears that the rate of decay of the Lyapunov function $V$ is roughly about half of that in our simulations. However, we would like to point out that our average inter-transmission times are of the same order as in \cite{mazo2012arxiv} by the same authors. In any case, for LTI systems, our proposed method does not require communication from the controller to sensors to achieve global asymptotic stability. Lastly, as a measure of the usefulness of the event-triggering mechanism compared to a purely time-triggered mechanism, $T_i / \bar{T}_i$ was computed for each $i$ and were obtained as $[ T_1 / \bar{T}_1, T_2 / \bar{T}_2, T_3 / \bar{T}_3, T_4 / \bar{T}_1 ] = [0.44, 0.55, 0.36, 0.58]$. The lower these numbers are, the better it is.

\subsection{Nonlinear System Example}

The general result for nonlinear systems is illustrated through simulations of the following second order nonlinear system.
\begin{align}
&\dot{x} = f(x,x_e) =
\begin{bmatrix}
f_1(x,x_e)\\
f_2(x,x_e)
\end{bmatrix} =
Ax + \begin{bmatrix} 0\\ x_1^3 \end{bmatrix} + Bu \label{eqn:nlin_spring}\\
&\text{where } A = \begin{bmatrix}
0 & 1\\
0 & -1
\end{bmatrix}, \quad B = \begin{bmatrix} 0\\ 1 \end{bmatrix} \notag
\end{align}
where $x = [x_1, x_2]^T$ is a vector in $\mathbb{R}^2$ and the sampled data controller (in terms of the measurement error) is given as
\begin{align}
u = k(x+x_e) = K (x + x_e) - (x_1 + x_{1,e})^3
\end{align}
where $K = [k_1, k_2]$ is a $1 \times 2$ row vector such that $\bar{A} = (A+BK)$ is Hurwitz. Then, the closed-loop system with event-triggered control can be written as
\begin{align}
\dot{x} &= \bar{A}x + BK x_e + \begin{bmatrix} 0\\ x_1^3 - (x_1 + x_{1,e})^3 \end{bmatrix} \notag\\
&= \bar{A}x + \begin{bmatrix} 0\\ h_1 + h_2 \end{bmatrix}
\label{eqn:CL_nlin_spring}
\end{align}
where
\begin{align}
h_1 &= - \Big( x_{1,e}^3 + 3 x_1 x_{1,e}^2 + ( 3 x_1^2 - k_1 ) x_{1,e} \Big) \label{eqn:h1}\\
h_2 &= k_2 x_{2,e} \label{eqn:h2}
\end{align}

Now, consider the quadratic Lyapunov function $V = x^T P x$ where $P$ is a symmetric positive definite matrix that satisfies the Lyapunov equation $P\bar{A} + \bar{A}^T P = - Q$, with $Q$ a symmetric positive definite matrix. Let $p_m$ and $p_M$ be the smallest and largest eigenvalues of the matrix $P$. Since $P$ is a symmetric positive definite matrix, $p_m$ and $p_M$ are each positive real numbers. Further,
\begin{equation*}
\alpha_1(\norm{x}) \triangleq p_m \norm{x}^2 \leq V(x) \leq p_M \norm{x}^2 \triangleq \alpha_2(\norm{x}), \quad \forall x \in \mathbb{R}^2
\end{equation*}
The time derivative of $V$ along the trajectories of the closed loop system \eqref{eqn:CL_nlin_spring} can be shown to satisfy
\begin{align*}
\dot{V} &= -x^T Q x + 2 x^T PB (h_1 + h_2)\\
&\leq -(1-\sigma) Q_m \norm{x}^2 + \norm{x} \big( \norm{2PB (h_1 + h_2)} - \sigma Q_m \norm{x} \big)
\end{align*}
where $Q_m$ is the smallest eigenvalue of the symmetric positive definite matrix $Q$ and $\sigma$ is a parameter satisfying $0 < \sigma < 1$.

Suppose that the desired region of attraction be $S(c)$, for some non-negative $c$ (see \eqref{eqn:S_def} for the definition of $S(c)$). Let $\mu_1$ be the maximum value of $x_1$ on the sub-level set $S(c)$. Then, we let
\begin{align*}
&h_1^c = \norm{x_{1,e}}^3 + 3 \mu_1 \norm{x_{1,e}}^2 +  \max_{\norm{x_1} \leq \mu_1} \{3 x_1^2 - k_1\}\norm{x_{1,e}} \notag\\
&\gamma_1(\norm{x_{1,e}}) \triangleq \frac{\norm{2PB} h_1^c}{\sigma \theta_1 Q_m}, \quad \gamma_2(\norm{x_{2,e}}) \triangleq \frac{\norm{2PBk_2} \norm{x_{2,e}}}{\sigma \theta_2 Q_m}
\end{align*}
where $\theta_1$ and $\theta_2$ are positive constants such that $\theta_1 + \theta_2 = 1$. It is clear that Assumption \eqref{A:ISS} is satisfied and we have
\begin{align*}
\dot{V} &\leq -(1-\sigma) Q_m \norm{x}^2, \quad \text{if } \gamma_i \norm{x_{i,e}} \leq \norm{x}, \quad i \in \{1, 2\}
\end{align*}

Now, $\mu \triangleq \alpha_1^{-1}(c) = \sqrt{c/p_m}$ is the maximum value of $\norm{x}$ on the set $S(c)$. Hence, $M_1(c)$ in \eqref{eqn:M_def} has to be defined for the set on which $\norm{x_{i,e}} \leq R_1 \triangleq \gamma_1^{-1}(\mu)$. Thus, we have that
\begin{align}
&\frac{1}{M_1(c)} = \frac{\norm{2PB}}{\sigma \theta_1 Q_m} \Big( R_1^2 + 3 \mu_1 R_1 + \max_{\norm{x_1} \leq \mu_1} \{3 x_1^2 - k_1\} \Big) \notag \\
&\frac{1}{M_2(c)} = \frac{\norm{2PBk_2}}{\sigma \theta_2 Q_m} \label{eqn:M12_exmp}
\end{align}
Now, only $T_i$ for each $i$ needs to be determined. To this end, the closed loop system dynamics \eqref{eqn:CL_nlin_spring} are bounded as in \eqref{eqn:lip} and \eqref{eqn:lip_i}.
\begin{align*}
| f_1(x,x_e) | &\leq L_1 \norm{x} + D_1 \norm{x_e}\\
| f_2(x,x_e) | &\leq L_2 \norm{x} + D_{2,\mu} \norm{x_e}, \quad \forall x \text{ s.t. } \norm{x} \leq \mu
\end{align*}
Comparing with \eqref{eqn:CL_nlin_spring} the following can be arrived at.
\begin{align*}
&L_1 = \norm{ r_1(\bar{A}) }, \quad D_1 = 0, \quad L_2 = \norm{ r_2(\bar{A}) }\\
&D_{2,\mu} = \sqrt{ \Big( R_1^2 + 3 \mu_1 R_1 + \max_{\norm{x_1} \leq \mu_1} \{3 x_1^2 - k_1\} \Big)^2 + k_2^2 }
\end{align*}

In the example simulation results presented here, the following gains and parameters were used.
\begin{align}
&K = - \begin{bmatrix} 5 & 3 \end{bmatrix}, \quad
Q = \begin{bmatrix}
1 & 0\\
0 & 1
\end{bmatrix}, \quad \theta_1 = 0.9, \quad \theta_2 = 0.1 \notag\\
&\sigma = 0.9, \quad c = 10, \quad \mu_1 = \mu \notag\\
&x(0) = [2.8, -2.6]^T, \quad x_s(0) = [2.9, -2.7]^T \label{eqn:nlin_sim_par}
\end{align}

Notice that $M_2(c)$ is a constant independent of $c$. That is why $\theta_2$ has been chosen much smaller than $\theta_1$. The parameter $\mu_1$ has been chosen to be equal to $\mu$. To be consistent with asynchronous transmissions, the initial value of $x_s(0)$ has been chosen to be different from $x(0)$. The simulation time was chosen as $T_{sim} = 10\text{s}$.

For the chosen parameters and the initial conditions, the initial value of the Lyapunov function is $V(0) = 8.574$. Thus the initial state of the system is well within the region of attraction, given by $S(c) = S(10)$. The event-trigger parameters were obtained as $[w_1, w_2] = [0.0045, 0.0832]$ and $[T_1, T_2] = [4, 3.4]\text{ms}$, which were also the minimum inter-transmission times. The average inter-transmission times of the sensors for the duration of the simulated time were obtained as $[\bar{T}_1, \bar{T}_2] = [4.2, 26.2]\text{ms}$. Thus for sensor 1, the average inter-transmission interval is only marginally better than the minimum. The number of transmissions by sensors 1 and 2 were $2366$ and $382$, respectively.

Figures \ref{fig:nlin_dec_ET_V} and \ref{fig:nlin_dec_ET_Vdot} show the evolution of the Lyapunov function and its derivative along the trajectories of the closed loop system, respectively. Figures \ref{fig:nlin_dec_ET_intersamp} and \ref{fig:nlin_dec_ET_intersamp_dist} show the inter-transmission times and the cumulative frequency distribution of the inter-transmission times, $\mathcal{D}^{x_i}( \mathcal{T}, T_{sim} )$, for each of the sensor. The sharp rise of the cumulative distribution curve for Sensor 1 clearly indicates that the event-triggered transmission is nearly equivalent to time-triggered transmission. On the other hand, the slow rise of the cumulative distribution curve of Sensor 2 demonstrates the usefulness of event-triggering in its case.

\begin{figure*}[!htb]
\centering
\subfloat[]{\label{fig:nlin_dec_ET_V}\includegraphics[width=0.25\textwidth]{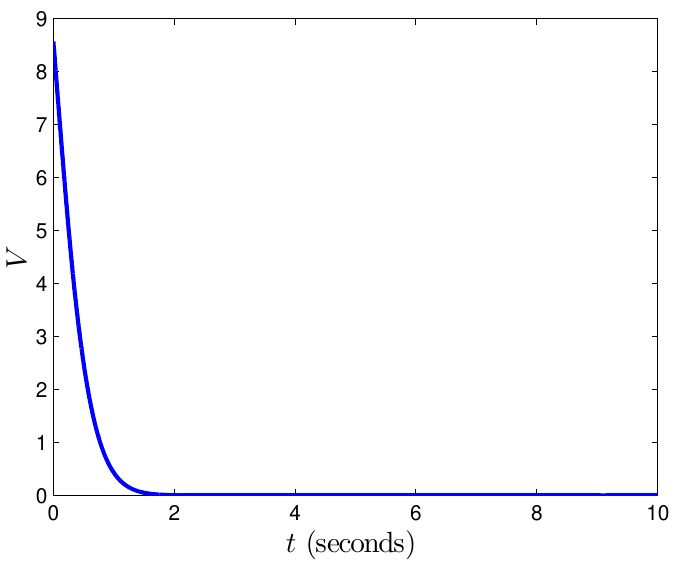}}
\subfloat[]{\label{fig:nlin_dec_ET_Vdot}\includegraphics[width=0.25\textwidth]{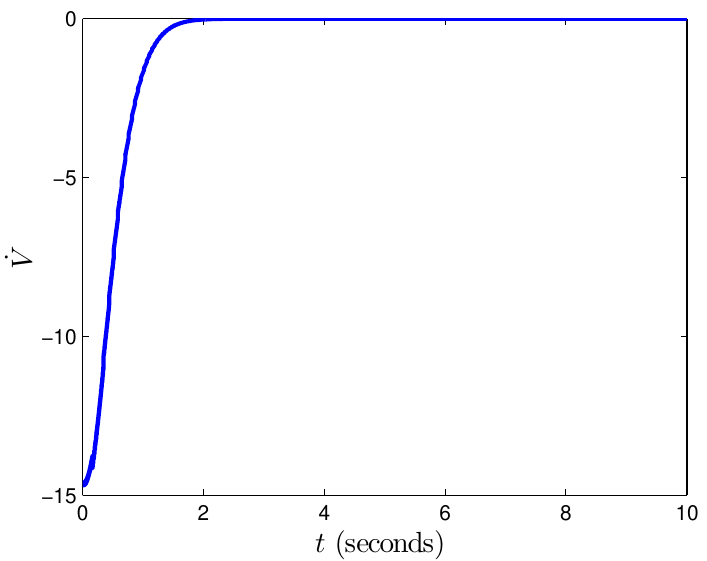}}
\subfloat[]{\label{fig:nlin_dec_ET_intersamp}\includegraphics[width=0.25\textwidth]{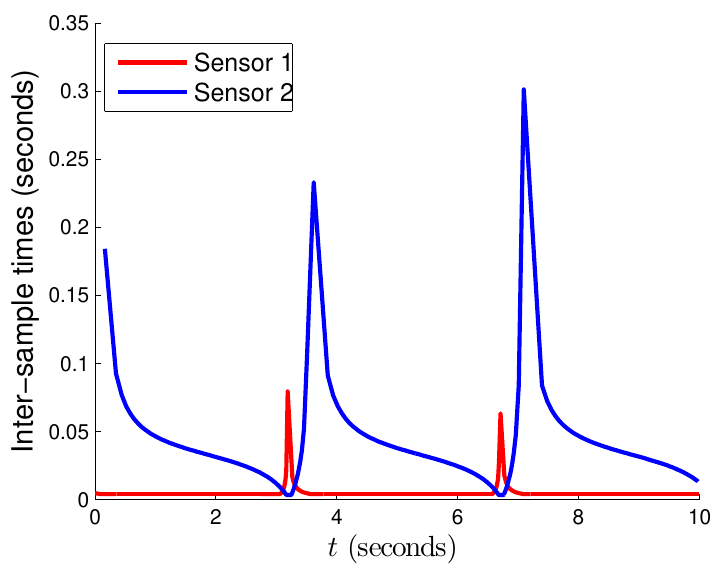}}
\subfloat[]{\label{fig:nlin_dec_ET_intersamp_dist}\includegraphics[width=0.25\textwidth]{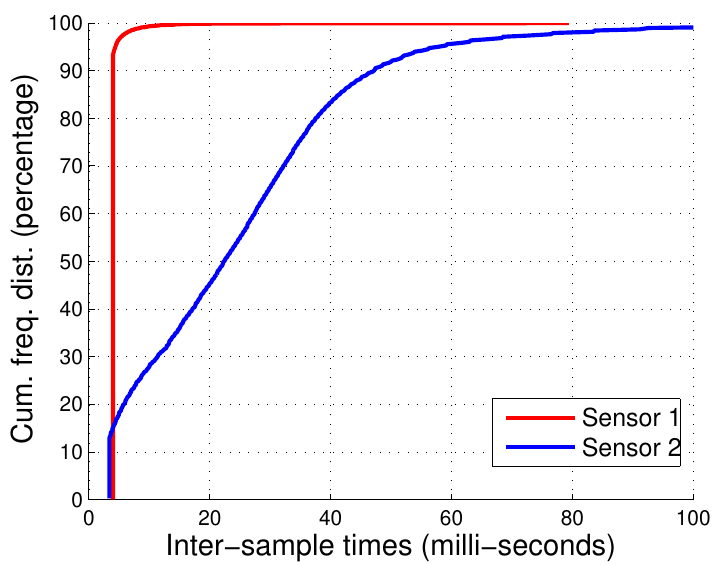}}
\caption{Nonlinear system example: evolution of the (a) Lyapunov function, (b) time derivative of Lyapunov function, along the trajectories of the closed loop system. (c) Sensor inter-transmission times (d) cumulative frequency distribution of the sensor inter-transmission times, $\mathcal{D}^{x_i}( \mathcal{T}, T_{sim} )$, where $T_{sim} = 10s$ is the simulation time.}
\end{figure*}

Simulations were also performed for the case when the central controller intermittently sends updates to the parameters of the sensor event-triggers, as in Theorem \ref{thm:dec_wT}. For the simulation results presented here, the controller gains, parameters and the initial conditions have been chosen the same as in \eqref{eqn:nlin_sim_par}. Additionally, the parameters in \eqref{eqn:ts_calV} were chosen as $\mathcal{T} = 0.5$ and $\rho = 0.5$. The initial condition $\mathcal{V}_s(0) = c = 10$ was chosen.

To obtain the upper bound on the Lyapunov function, $\mathcal{V}(x_s)$ in \eqref{eqn:calV}, the following procedure was adopted. From the event-triggers \eqref{eqn:ET_dec_sensors_bidir}, we have for each $ i \in \{1, \ldots, n\}$ that
\begin{equation*}
\norm{x_{i,s}(t) - x_i(t)} = \norm{x_{i,e}(t)} \leq w_i(t) \norm{x}(t), \ \forall t \geq 0
\end{equation*}
from which we obtain (ignoring the time arguments)
\begin{align*}
&\sum_{i=1}^n \norm{x_{i,s} - x_i}^2 \leq W^2 \sum_{i=1}^n \norm{x_i}^2, \quad \text{where } W = \sqrt{\sum_{i=1}^n w_i^2}\\
&\implies (1-W^2) \sum_{i=1}^n \norm{x_i}^2 - 2 \sum_{i=1}^n \norm{x_{i,s}} \norm{x_i} + \sum_{i=1}^n \norm{x_{i,s}}^2 \leq 0
\end{align*}
which is the equation of an $n$-sphere. Thus, the system state is in the $n$-sphere given by
\begin{align}
&\norm{x - x_c} \leq R \label{eqn:x_sphere}\\
&\text{where }\,\, x_c = \frac{1}{1 - W^2} x_s, \quad R = \frac{W}{1 - W^2} \norm{x_s} \label{eqn:xc_R}
\end{align}
Obviously, for these equations to make sense, $W^2$ has to be strictly less than $1$. 
Notice from \eqref{eqn:M12_exmp},
\begin{align*}
&M_1(c) \leq \frac{\sigma \theta_1 Q_m}{\norm{2PB} (-k1)} \leq 0.4493, \ \forall c \geq 0 \notag \\
&M_2(c) = \frac{\sigma \theta_2 Q_m}{\norm{2PBk_2}} \leq 0.0832, \ \forall c \geq 0
\end{align*}
As a consequence $W^2(t) \leq 0.2088 < 1$ for all $t \geq 0$.

Next from \eqref{eqn:x_sphere}, we know that $\norm{x} \leq \norm{x_c}+R$ and hence that $V(x) \leq \alpha_2(\norm{x_c}+R)$. However, this may be conservative and a better estimate may be obtained by maximizing $V(x)$ on the set given by \eqref{eqn:x_sphere}. In fact, on this set, $V(x)$ is maximized on the boundary of the $n$-sphere. This is because if the maximum does not occur on the boundary and instead occurs only in the interior of the $n$-sphere \eqref{eqn:x_sphere}, then the maximizing sub-level set, $S_M$, of $V$ lies strictly and completely in the interior of the $n$-sphere, which means $S_M$ is not the smallest sub-level set of $V$ that contains the complete $n$-sphere. Thus, an upper bound on the value of $V(x)$ is provided by
\begin{equation}
\mathcal{V}(x_s) \geq \max \{V(x) : \norm{x - x_c} = R \}
\end{equation}
That is, for the $2$ dimensional system in the example \eqref{eqn:nlin_spring}, $\mathcal{V}(x_s)$ is the maximum value of $V$ along a circle. $\mathcal{V}$ was then found in MATLAB by maximization of $V$ on the circle, which was parametrized by a single angle variable varying on the closed interval $[0, 2 \pi]$. Finally, $w_i(t) = M_i(\mathcal{V}_s(t))$ was used for each $i$ and all $t$.

In this case, the number of transmissions by Sensor 1 were much lower at $198$ while those by Sensor 2 were $322$. Notice that $w_2 = M_2(c)$ is a constant, independent of the value of $c$. Thus, we see that the reduction in the number of transmissions by Sensor 2 is only marginal while that of Sensor 1 is huge. The average inter-transmission times of the sensors for the duration of the simulated time were obtained as $[\bar{T}_1, \bar{T}_2] = [50.5, 31.1]\text{ms}$. The minimum inter-transmission times were observed as $4.2\text{ms}$ and $9\text{ms}$ for Sensors 1 and 2, respectively. The number of times the parameters of the sensor event-triggers were updated was $16$.

The evolution of the Lyapunov function and its derivative along the trajectories of the closed loop system were very similar to that in Figures \ref{fig:nlin_dec_ET_V} and \ref{fig:nlin_dec_ET_Vdot}, respectively. Hence, they have not been presented here again. Figures \ref{fig:nlin_dec_mulET_intersamp} and \ref{fig:nlin_dec_mulET_intersamp_dist} show the inter-transmission times and the cumulative frequency distribution of the inter-transmission times, $\mathcal{D}^{x_i}( \mathcal{T}, T_{sim} )$, for each of the sensor. These two plots clearly show the usefulness of the event-triggered transmissions. Figure \ref{fig:nlin_dec_mulET_w} shows the evolution of the $w_i$ parameters of the event-triggers at each of the sensors. As mentioned earlier, $w_2$ is independent of $c$ and hence is a constant. The evolution of $w_1$ shows that it is a non-decreasing function of time. Finally, Figure \ref{fig:nlin_dec_mulET_T} shows the evolution of the $T_i$ parameters of the event-triggers at the sensors (for clarity $T_2$ has been scaled by $20$ times). Although, $T_1$ evolves in a non-decreasing manner, the same is not the case with $T_2$.

\begin{figure*}[!htb]
\centering
\subfloat[]{\label{fig:nlin_dec_mulET_intersamp}\includegraphics[width=0.25\textwidth]{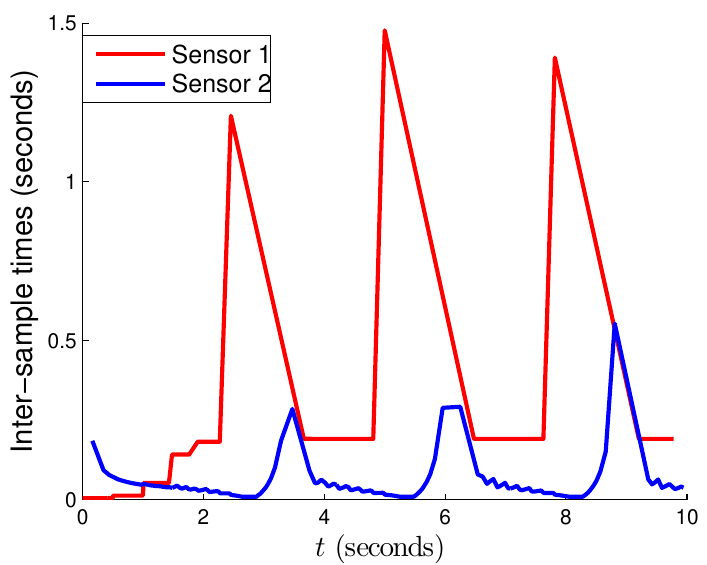}}
\subfloat[]{\label{fig:nlin_dec_mulET_intersamp_dist}\includegraphics[width=0.25\textwidth]{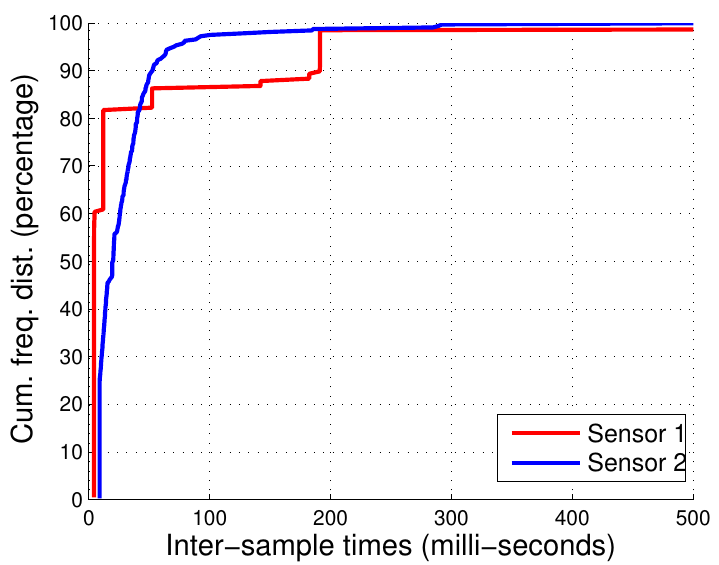}}
\subfloat[]{\label{fig:nlin_dec_mulET_w}\includegraphics[width=0.25\textwidth]{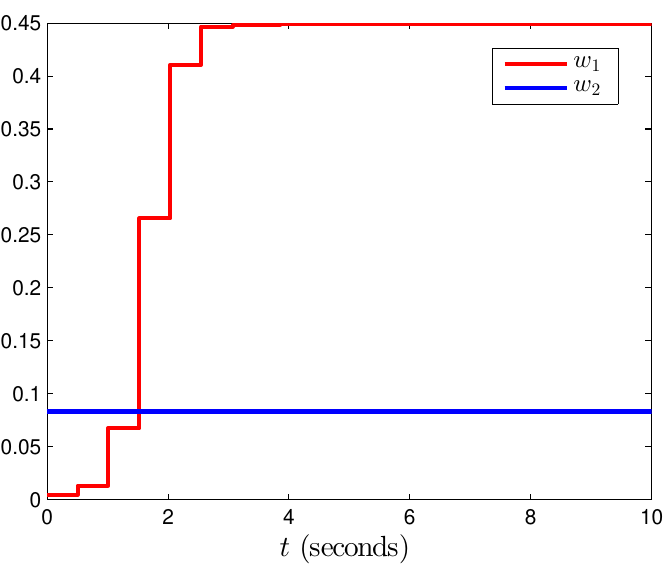}}
\subfloat[]{\label{fig:nlin_dec_mulET_T}\includegraphics[width=0.25\textwidth]{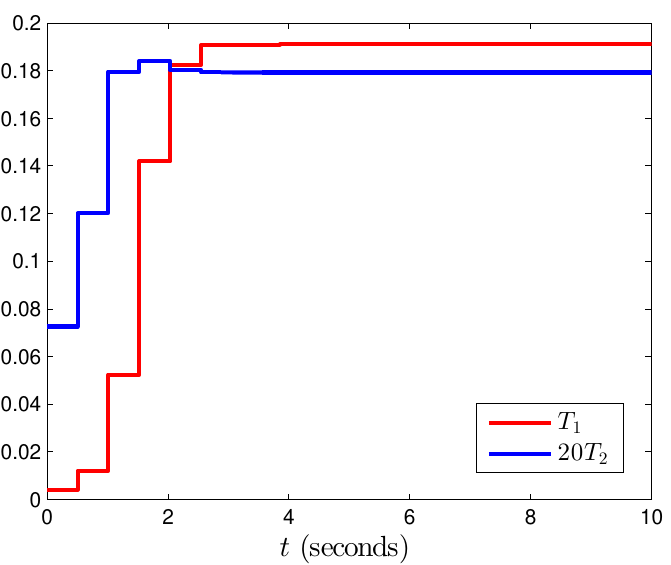}}
\caption{Nonlinear system example with event-triggered communication from the controller to the sensor event-triggers: (a) Sensor inter-transmission times (b) cumulative frequency distribution of the sensor inter-transmission times, $\mathcal{D}^{x_i}( \mathcal{T}, T_{sim} )$, where $T_{sim} = 10s$ is the simulation time. Evolution of (c) $w_i$, (d) $T_i$ parameters of the sensor event-triggers.}
\end{figure*}

\section{Conclusions}
\label{sec:conc}

In this paper, we have developed a method for designing decentralized event-triggers for control of nonlinear systems. The architecture of the systems considered in this paper included full state feedback, a central controller and distributed sensors not co-located with the central controller. The aim was to develop event-triggers for determining the time instants of transmission from the sensors to the central controller. The proposed design ensures that the event-triggers at each sensor depend only on locally available information, thus allowing for asynchronous transmissions from the sensors to the central controller. Further, the design aimed at completely eliminating (or drastically reducing) the need for the sensors to listen to other sensors and/or the controller.

The proposed design was shown to guarantee a positive lower bound for inter-transmission times of each sensor (and of the controller in one of the special cases). The origin of the closed loop system is also guaranteed to be asymptotically stable with an arbitrary, but priorly fixed, region of attraction. In the special case of linear systems, the region of attraction was shown to be global with absolutely no need for the sensors to listen. Finally, the proposed design method was illustrated through simulations of a linear and a nonlinear example.

In the system architecture considered in this paper, although the control input to the plant is updated intermittently, it is not exactly event-triggered. In fact, in all the results the inter-transmission times of each sensor individually have been shown to have a positive lower bound. And the time interval between receptions at the central controller from two different sensors can be arbitrarily close to zero. Since the control input to the plant is updated each time the controller receives some information, no positive lower bound can be guaranteed for the inter-update times of the controller. However, it is not very difficult to additionally incorporate event-triggering (with guaranteed positive minimum inter-update times) or an explicit threshold on inter-update times of the control, as in \cite{mazo2011}.

Next, although the transmissions of sensors have been designed to be asynchronous, the communication from the central controller to the sensors in Section \ref{sec:2way_com} have been assumed to be synchronous. In future, we aim to allow these communications also to be asynchronous. Although time delays have not been considered explicitly, they may be handled as in most event-triggered control literature (see \cite{tabuada2007} for example). It is worthwhile to investigate more sophisticated triggers for updating the parameters $w_i$ and $T_i$ (Section \ref{sec:2way_com}) as is a thorough study and quantification of sensor listening effort. Finally, our results were short of mathematically demonstrating an improvement in the inter-transmission times for the scheme of Section \ref{sec:2way_com} compared to that of Section \ref{sec:1way_com}. We believe that a promising approach to the quantification of any improvement is through analytical characterization of the frequency distribution of the inter-transmission times, $\mathcal{D}^{x_i}( \mathcal{T}, T_{INT} )$.


\bibliographystyle{IEEEtran}
\bibliography{IEEEabrv,control_refs}

\end{document}